\documentclass[journal,10pt]{IEEEtran}
\usepackage{color}
\usepackage{mathrsfs}
\usepackage{amsfonts}
\usepackage{amsmath}
\usepackage{amssymb}
\usepackage{graphicx}
\usepackage{subfigure}
\IEEEoverridecommandlockouts

\newtheorem{definition}{Definition}
\newtheorem{theorem}{Theorem}
\newtheorem{lemma}{Lemma}
\newtheorem{corollary}{Corollary}
\newtheorem{remark}{Remark}
\newtheorem{assumption}{Assumption}
\newtheorem{problem}{Problem}
\newtheorem{algorithm}{Algorithm}
\newcommand{\tcb}{\textcolor{black}}

\allowdisplaybreaks[4]
\begin{document}

\title{Dynamic Power Control for Delay-Aware Device-to-Device Communications}

\author{\IEEEauthorblockN{Wei Wang,~\IEEEmembership{Member,~IEEE}$^\dag$$^\ddag$, Fan Zhang,~\IEEEmembership{Student Member,~IEEE}$^\ddag$, Vincent K. N. Lau,~\IEEEmembership{Fellow,~IEEE}$^\ddag$
\thanks{Manuscript received May 3, 2013; revised November 14, 2013; accepted
August 6, 2014. This work is supported in part by National Natural Science Foundation of China (Nos. 61261130585, 61001098), National Hi-Tech R\&D Program (No. 2014AA01A702), Research Grants Council (RGC) of Hong Kong (No. N$_-$HKUST605/13), and Hong Kong Scholars Program (No. 2012T50566). Vincent K. N. Lau is also the Changjiang Chair Professor at Zhejiang University, P.R.~China.}\\
}
\vspace{0.2cm}
\parbox{0.55\textwidth}{\centering $^\dag$Dept. of Information Science and Electronic Engineering \\
Zhejiang Key Lab. of Information Network Technology \\
Zhejiang University, Hangzhou 310027, P.R. China}
\hfill
\parbox{0.44\textwidth}{\centering $^\ddag$Dept. of Electrical and Computer Engineering\\
Hong Kong University of Science and Technology\\
Clear Water Bay, Hong Kong}
Email: wangw@zju.edu.cn, fzhangee@ust.hk, eeknlau@ust.hk
}


\maketitle

\begin{abstract}
In this paper, we consider the dynamic power control for  delay-aware D2D communications. The stochastic optimization problem is formulated as an infinite horizon average cost Markov decision process. To deal with the curse of dimensionality, we utilize the interference filtering property of the CSMA-like MAC protocol and derive a closed-form approximate priority function  and the associated error bound using perturbation analysis. Based on the closed-form approximate priority function, we propose a low-complexity power control algorithm solving  the per-stage optimization problem. The proposed solution is further shown to be asymptotically optimal for a sufficiently large carrier sensing distance.  Finally, the proposed power control scheme is compared with various baselines through simulations, and it is shown that significant performance gain can be achieved.
\end{abstract}

\IEEEpeerreviewmaketitle

\section{Introduction}
Future wireless cellular networks (e.g. IMT-advanced) are expected to provide higher data rates and system capacity. One potential technology to meet the demands is the infrastructure-assisted device-to-device (D2D) communications \cite{3gpp}. Taking advantage of the physical proximity of communication devices, the D2D technique enables direct communications between devices, which results in high data rates, low delays and low power consumption. Unlike conventional ad hoc networks, the cellular base station (BS) plays an important role for D2D communications in helping the D2D nodes on both peer discovery and resource allocation \cite{D2D1}. There are several existing works on D2D communications in cellular networks. In \cite{D2D2} and \cite{D2D3}, the D2D nodes share the spectrum with cellular users using an underlay approach, in which the throughput of D2D communications is maximized while the QoS of the cellular users is guaranteed. In \cite{D2D4} and \cite{D2D5}, the maximum sum-rate of the network is achieved by dynamically selecting one of the transmission modes, including D2D mode with shared channels, D2D mode with dedicated channels and cellular transmission mode. In \cite{D2D6}, the multi-antenna cellular BS acts as a cooperative relay, helping the D2D nodes forward packets so as to improve the throughput of the network. {Power control is important for interference coordination among the nodes in wireless networks. The transmit power is adjusted to meet the users' required signal to interference plus noise ratios (SINR)~\cite{power1}, satisfy the received signal power level~\cite{power3} or achieve a higher data rate~\cite{power4}. In \cite{power5}, the transmit power is minimized for D2D communications subject to a sum-rate constraint.}
However, these existing works have all focused on the physical layer performance without consideration of the bursty data arrivals at the transmitters as well as the delay requirement of the information flows. Since real-life applications (such as video streaming, web browsing or VoIP) are delay-sensitive, it is  important to optimize the delay performance for D2D communications.

To take the queueing delay into consideration, the radio resource control policy should be a function of both the channel state information (CSI) and the queue state information (QSI). This is because the CSI reveals the instantaneous transmission opportunities at the physical layer and the QSI reveals the urgency of the data flows. However, the associated optimization problem is very challenging. A systematic approach to the delay-aware optimization problem is through the Markov Decision Process (MDP). In general, the optimal control policy can be obtained by solving the well-known \emph{Bellman equation}. Conventional solutions to the Bellman equation, such as brute-force value iteration or policy iteration \cite{DPcontrol}, have huge complexity (i.e., the curse of dimensionality), because solving the Bellman equation involves solving an exponentially large system of non-linear equations. There are some existing works that use the \emph{stochastic approximation} approach with distributed online learning algorithm \cite{delay2}, which has linear complexity. However, the stochastic learning approach can only give a numerical solution to the Bellman equation and may suffer from slow convergence and lack of insight. We treat this issue and provide some preliminary results on cross-layer design with closed-form solution in \cite{mag}.

In this paper, we investigate the dynamic power control for D2D communications systems. We focus on minimizing the average transmit power and the average delay of  the D2D data flows. There are several technical challenges associated with the dynamic power control optimization problem.
\begin{itemize}
\item \textbf{Challenges due to the Average Delay Consideration:} Unlike other papers which optimize the physical layer throughput of the D2D systems, the optimization involving delay constraints is fundamentally challenging. This is because the associated problem belongs to the class of {\em stochastic optimization} \cite{delay1}, which embraces both {\em information theory} (to model the physical layer dynamics) and {\em queueing theory} (to model the queue dynamics). A key obstacle to solving the associated Bellman equation is to obtain the priority function, and there is no easy and systematic solution in general \cite{DPcontrol}.
\item \textbf{Challenges due to the Coupled Queue Dynamics:} {The interference among the D2D nodes \cite{interference1,interference3} fundamentally induces coupled queue dynamics among the D2D flows. For instance, the service rate of the queue for each D2D flow depends on the transmit power of all the other active D2D flows due to the mutual interference.} The associated stochastic optimization problem is a $K$-dimensional MDP, where $K$ is the number of D2D flows. This $K$-dimensional MDP leads to the curse of dimensionality with complexity exponential to $K$ for solving the associated Bellman equation. It is  highly nontrivial to obtain a low complexity solution for the dynamic resource control of the D2D systems.
\item \textbf{Challenges due to the Non-Convexity Nature:} Despite the complexity issue involved in obtaining the priority function for the stochastic optimization problem, the per-stage control optimization in the Bellman equation is also non-convex due to the mutual interference term in the mutual information. This poses a great challenge in solving the delay-constrained optimization in the D2D systems.
\end{itemize}

\begin{figure}
  \centering
  \includegraphics[width=3.5in]{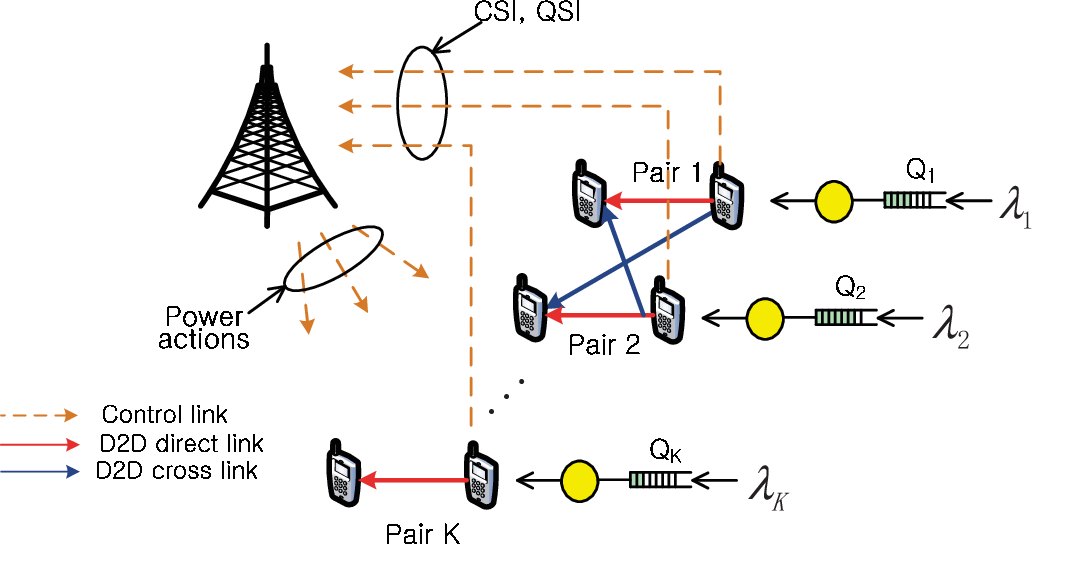}
  \caption{Topology of an infrastructure-assisted D2D communications system.}
  \label{topo}	\vspace{-0.3cm}
\end{figure}

In this paper, we first establish the PHY, MAC and bursty data source models as well as the queue dynamics in Section II. We formally formulate the associated stochastic optimization problem of the dynamic power control for delay-aware D2D communications as {\em an infinite horizon average cost MDP}. {To overcome the aforementioned technical challenges, we exploit specific problem structures in D2D communications. Specifically, 1) the CSMA-like MAC protocol is adopted to coordinate the transmissions of the D2D nodes in a distributive way and this induces a \emph{weak interference} topology among the simultaneously transmitting D2D nodes, and 2) the assistance of the BS substantially simplifies the signaling mechanism of control information exchange.} We derive a {\em simplified optimality condition} for solving the MDP in Section III. Compared with the conventional  {\em Bellman equation} \cite{DPcontrol}, the derived optimality condition involves solving a $K$-dimensional partial differential equation (PDE) only. Utilizing the {\em interference filtering property} of the MAC protocol, we obtain a closed-form approximate priority function and the associated error bound using {\em perturbation analysis}. Based on that, we obtain a delay-aware low complexity dynamic power control algorithm for the D2D communications in Section IV. The solution is shown to be asymptotically optimal for a sufficiently large carrier sensing distance in the MAC protocol. Furthermore, in Section V, we show that the proposed solution achieves significant performance gain over various baseline schemes.

\section{System Model}
In this section, we introduce the system model for the infrastructure-assisted D2D communications, including the D2D system topology, the physical layer model, the MAC layer model and the bursty data source model. {We first list the important notations in this paper in Table~1.}
\begin{table}[!h]
\caption{List of important notations}
\vspace{-0.3cm}
\begin{center}
\begin{tabular}{l@{}ll@{}l}
\hline
\multicolumn{2}{c}{Symbol}&\multicolumn{2}{c}{Meaning} \\
\hline
$K$                  &&number of D2D pairs\\
$\mathbf{P}=\{P_{k}\}$                &&transmit power\\
$\mathbf{H}=\{H_{kj}\}$                &&global CSI\\
$\mathbf{L}=\{L_{kj}\}$                &&large-scale path gain\\
$\boldsymbol{\sigma}=\{\sigma_{k}\}$                &&MAC output\\
$\boldsymbol{\nu}=\{\nu_{k}\}$                &&probability of accessing the channel\\
$\mathbf{A}=\{A_{k}\}$                &&bit/packet arrival\\
$\boldsymbol{\lambda}=\{\lambda_{k}\}$                &&average arrival rate\\
$\mathbf{Q}=\{Q_{k}\}$                &&global QSI\\
$\boldsymbol{\chi}=\{\boldsymbol{\sigma},\mathbf{H},\mathbf{Q}\}$  &&global system state\\
$\boldsymbol{\Omega}(\boldsymbol{\chi})=\{\Omega_{k}(\boldsymbol{\chi})\}$  &&power control policy\\
$\tau$               &&duration of a time slot\\
$C_k(\mathbf{H},\mathbf{P})$                 &&achievable data rate of the $k$-th D2D pair&\\
$\delta$               &&carrier sensing distance\\
$L^\delta$           &&worst-case cross-channel path gain\\
$V^*(\mathbf{Q})$    &&priority function\\
\hline
\end{tabular}
\end{center}
\vspace{-0.3cm}
\end{table}

\subsection{D2D System Topology}
We consider an infrastructure-assisted D2D communications system, as shown in Fig.~\ref{topo}. Specifically, the D2D system consists of two tiers, namely the {\em cellular tier} and the {\em D2D tier}. In the D2D tier, there are $K$ transmitter-receiver (Tx-Rx) pairs located randomly in the area of a cell. Transmitter $k$ transmits data to receiver $k$, and the Tx-Rx pair is associated by the D2D peer discovery procedure \cite{D2D1}. All D2D pairs share a common channel, which is orthogonal to the channels used in the cellular tier\footnote{The channel for D2D communications could be a dedicated part of the licensed spectrum allocated by the BS, or another spectrum band, e.g., Wifi D2D transmission on the ISM band.}. Hence, there is no cross-tier interference between the cellular and D2D tiers. In the cellular tier, the BS plays the role of the centralized controller for the D2D communications. Each D2D pair communicates directly on a single-hop link in a distributed ad-hoc manner with the assistance of the cellular BS. The time is slotted, and the duration of each time slot is $\tau$. The cellular BS collects  necessary information and broadcasts the resource allocation actions (calculated based on the collected information) periodically to the D2D nodes at the beginning of each time slot.

\subsection{Physical Layer Model}
Let $s_k$ denote the information symbol for the $k$-th D2D pair. The received signal at receiver $k$ is

\begin{equation}
y_k= \underbrace{h_{kk} \sqrt{P_k} s_k}_{\text{desired signal}}+\underbrace{\sum_{j\neq k} h_{kj}\sqrt{P_j} s_j}_{\text{interference}}+\underbrace{z_k}_{\text{noise}}
\end{equation}
where $h_{kj}$ is the complex channel fading coefficient between transmitter $j$ and receiver $k$, and $z_k\sim \mathcal{CN}(0,N_0)$ is the i.i.d. complex Gaussian channel noise with  power $N_0$. $P_k$ is the transmitter power for $s_k$.  Let $\mathbf{H}(t)=\{H_{kj}(t):\forall j,k \}$ be the global CSI, where $H_{kj}(t)=|h_{kj}(t)|^2$ is the instantaneous channel path gain from transmitter $j$ to receiver $k$ at the $t$-th time slot. {We consider the CSI according to the block fading channel model \cite{block1,block2} and have the following assumption on $\mathbf{H}$:}
\begin{assumption}[Short-Term CSI Model]\label{a_CSI}
The CSI $\mathbf{H}(t)$ remains constant within a time slot and is i.i.d. over time slots. $H_{kj}(t)$ follows a negative exponential distribution\footnote{{Rayleigh fading is adopted as an example here for algebraic simplicity. The proposed optimization framework is general to cover various channel fading models as well. With other fading models, the difference is in integrating with different fading distributions when calculating the expectation over $\mathbf{H}$ to estimate the expected future cost.}} with mean $L_{kj}$.  Furthermore, $H_{kj}(t)$ is independent w.r.t. the D2D pair indices $k,j$.~\hfill \IEEEQED
\end{assumption}

Note that $L_{kj}$ is the  large-scale path gain from transmitter $j$ to receiver $k$. Let $\mathbf{L}=\{L_{kj}:\forall j,k\}$, and we have the following assumption on $\mathbf{L}$.

\begin{assumption}  [Long-Term Path Gain Model] \label{long}
The long-term large-scale path gain $\mathbf{L}$ is constant for the duration of the communication session. Specifically, for any transmitter $j$ and receiver $k$,  the relationship between the  path gain $L_{kj}$ and the distance $d_{kj}$ is\footnote{Here we adopt the Friis free space path loss model \cite{tse}. Note that the results of this paper can be extended easily for other path loss models.}  $L_{kj}= \frac{G_k^r G_j^t \lambda^2}{\left( 4 \pi d_{kj}\right)^2}$ ($\forall k,j$), where  $G_k^r$ and $G_j^t$ are the receive and transmit antenna gains respectively, and $\lambda$ is the carrier wavelength.~\hfill \IEEEQED
\end{assumption}

Let $\mathbf{P}(t)=\{P_k(t): \forall k\}$ be the collection of the transmit power of all the D2D transmitters at the $t$-th time slot. {For given CSI $\mathbf{H}(t)$ and power actions $\mathbf{P}(t)$, the achievable data rate of the $k$-th Tx-Rx pair depends on the SINR by treating interference as noise, which is calculated as}
\begin{equation}\label{capacity}
C_k(\mathbf{H}(t), \mathbf{P}(t))=\log_2\left(1+{\frac{1}{\Gamma}}\frac{H_{kk}(t)P_k(t)}{N_0+\sum_{j\neq k}H_{kj}(t)P_j(t)}\right)
\end{equation}
{where $\Gamma$ is the SINR gap \cite{SINRgap} to measure the practical reduction of the SINR with respect to the capacity. $\Gamma$ depends on the error probability requirement as well as the modulation scheme.}

\subsection{MAC Layer Model}
The D2D nodes utilize a CSMA-like protocol to arbitrate the random channel access in a distributed manner. The basic principle of the CSMA is {\em listen-before-talk} \cite{csma}, which is used  to avoid collision between simultaneous transmissions of neighboring nodes. As a result, the MAC protocol  determines the subset of the D2D nodes in which the transmitters can  transmit data simultaneously without causing excessive interference. For simplicity, we consider the following idealized MAC protocol model, which has  been widely adopted in justifying the {\em hardcore point process} \cite{geo}.
\begin{assumption}[Hardcore Point Process Model]\label{a_MAC}
The D2D nodes adopt a CSMA-like MAC protocol with the {\em carrier sensing distance}\footnote{Carrier sensing distance refers to the carrier sensing range of the associated CSMA protocol. Two nodes within the carrier sensing distance will not transmit simultaneously.} $\delta$. {The output of the MAC protocol is captured by the {\em MAC output process} $\boldsymbol{\sigma}(t) = (\sigma_1(t), \cdots,\sigma_K(t)) \in \{0,1\}^K$, where $\sigma_k(t)=1$ means that the $k$-th D2D node accesses the channel at the $t$-th time slot. The MAC output process $\boldsymbol{\sigma}(t)$ has the following properties:}
\begin{itemize}
\item ${\sigma}_k(t)$ is i.i.d. over time slots according to the Bernoulli distribution with mean $\mathbb{E}[\sigma_k(t)]=\nu_k\left(\delta \right)$.
\item all transmit nodes have equal opportunity to access the channel, i.e., $\nu_k\left(\delta \right)=\frac{1}{\left|\mathcal{N}_k\left(\delta \right)\right|+1}$, where $\mathcal{N}_k\left(\delta \right)$ is the set of transmit nodes within the carrier sensing distance $\delta$ from transmitter $k$ and  $\left|\mathcal{N}_k\left(\delta \right)\right|$ is the associated cardinality.
\item during each time slot $t$, a feasible $\boldsymbol{\sigma}(t)$ satisfies the following carrier sensing constraint: if $\sigma_k(t)=1$,  then $\sigma_j(t)=0$ for all $j\in \mathcal{N}_k$.
\hfill \IEEEQED
\end{itemize}
\end{assumption}

The first condition corresponds to the memoryless property of the MAC protocol with respect to the channel access. The second condition corresponds to the fairness among the D2D nodes in the neighbour set, and the third condition corresponds to the carrier sensing requirement in the MAC protocol. Note that $\nu_k(\delta)$ corresponds to the \emph{spatial reuse factor} for transmitter $k$ in the D2D network for a given carrier sensing distance $\delta$. Furthermore,  $\nu_k\left(\delta \right)$ and $\mathcal{N}_k \left(\delta \right)$ depend on the topology of the D2D nodes.

\subsection{Bursty Data Source and Queue Dynamics}

There is a bursty data source at each D2D transmitter. Let $\mathbf{A}(t)=(A_1(t)\tau, \cdots, A_K(t)\tau)$ be the random arrivals (number of bits) from the application layers to the $K$ D2D transmitters at the end of the $t$-th time slot\footnote{We assume that the transmitters are causal so that the packets arrived at the time slot are not observed when the control actions of this time slot are performed.}. We have the following assumption on $\mathbf{A}(t)$.
\begin{assumption}[Bursty Source Model]
Assume that $A_k\left(t\right)$ is i.i.d. over decision slots according to a general distribution $\Pr[A_k]$. The moment generating function of $A_k$ exists with  $\mathbb{E}[A_k]=\lambda_k$.  $ A_k\left(t\right)$ is independent w.r.t. $k$. Furthermore, the arrival rates  $(\lambda_1,\dots,\lambda_K)$ lie within the stability region \cite{comparison2} of the  system.~\hfill \IEEEQED
\end{assumption}

Each D2D transmitter has a data queue for the bursty traffic flows towards the associated receiver. Let $Q_k(t)\in [0, \infty)$ be the queue length (number of bits) at  transmitter $k$ at the beginning of the $t$-th  slot. Let $\mathbf{Q}(t)=(Q_1(t),\cdots,Q_K(t))\in \boldsymbol{\mathcal{Q}}\triangleq [0, \infty)^K$ be the global QSI. The queue dynamics of  transmitter $k$  is
\begin{equation}\label{queueing}
    Q_k(t+1)=\max \left\{Q_k(t)-\sigma_k(t)C_k(\mathbf{H}(t), \mathbf{P}(t))\tau,0 \right\}+A_k(t)\tau
\end{equation}

\begin{remark}[Weak Coupling Property of Queue Dynamics]
The $K$ queue dynamics in the  D2D system are coupled together due to the interference term in (\ref{capacity}). Specifically, the departure of the queue at each transmitter depends on the power actions of all the $K$ D2D transmitters. Furthermore, the CSMA-like mechanism in the MAC protocol model in Assumption~\ref{a_MAC} contributes to filtering the strong interference between the active D2D transmitters. Let $L^\delta=\max \{L_{kj}: \forall k \neq j, d_{kj}>\delta\}$ be the worst-case cross-channel path gain for a given sensing threshold $\delta$.  Due to the interference filtering property of the MAC protocol, there is only weak queue coupling in the D2D network, and $L^\delta$ measures the \emph{coupling intensity}. We will leverage this weak coupling property to derive low complexity closed-form approximate solutions in Section IV. \hfill \IEEEQED
\end{remark}

\section{Delay-Aware Cross-Layer Control Framework}
In this section, we formally formulate the delay-aware cross-layer radio resource control framework for D2D communications. We first define the control policy and the optimization objective. We then formulate the design as a Markov Decision Process (MDP) and derive the optimality conditions for solving the problem.

\subsection{Power Control Policy}
For delay-sensitive applications, it is important to dynamically adapt the transmit power of the D2D nodes based on the instantaneous realizations of the CSI (captures the instantaneous transmission opportunities) and the QSI (captures the urgency of the $K$ data flows). Let $\boldsymbol{\chi}=(\boldsymbol{\sigma},\mathbf{H},\mathbf{Q})$ denote the global system state. We  define the stationary power control policy below.
\begin{definition}[Stationary Power Control Policy]
A stationary control policy for the $k$-th D2D transmitter $\Omega_k$ is a mapping from the system state $\boldsymbol{\chi}$ to the  power control action of transmitter $k$. Specifically, $\Omega_k(\boldsymbol{\chi})=P_k \geq 0$. Let $\boldsymbol{\Omega}=\{\Omega_k:\forall k\}$ denote the aggregation of the control policies for all the $K$ D2D transmitters.\hfill \IEEEQED
\end{definition}

{Since the D2D nodes access the channel randomly, the MAC output $\boldsymbol{\sigma}$ is i.i.d. over time slots. The CSI $\mathbf{H}$ is i.i.d. over time slots based on the block fading channel model in Assumption~\ref{a_CSI}. Furthermore, from the queue evolution equation in (\ref{queueing}), $\mathbf{Q}(t+1)$ depends only on $\mathbf{Q}(t)$ and the data rate. Given a control policy $\boldsymbol{\Omega}$, the data rate at the $t$-th time slot depends on $\sigma_k(t)$, $\mathbf{H}(t)$ and $\boldsymbol{\Omega}(\boldsymbol{\chi(t)})$. Hence, the global system state ${\boldsymbol{\chi}(t)}$ is a controlled Markov chain \cite{DPcontrol} with the transition probability}
\begin{align}
&\Pr[\boldsymbol{\chi}(t+1)|\boldsymbol{\chi}(t),\boldsymbol{\Omega}(\boldsymbol{\chi}(t))]\\
=&\Pr[\boldsymbol{\sigma}(t+1)]\Pr[\mathbf{H}(t+1)]
\Pr[\mathbf{Q}(t+1)|\boldsymbol{\chi}(t),\boldsymbol{\Omega}(\boldsymbol{\chi}(t))]\nonumber
\end{align}
where the queue transition probability is given by
\begin{align}	
&\Pr[\mathbf{Q}(t+1)|\boldsymbol{\chi}(t),\boldsymbol{\Omega}(\boldsymbol{\chi}(t))]\nonumber\\
=& \left\{
	\begin{aligned}	
		&\prod_{k} \Pr\big[A_k\left( t\right)\big],   \ \ \tcb{\text{if } Q_k\left( t+1\right) \text{is given by (\ref{queueing})},\forall k}\\
		&\ 0, \hspace{2.6cm} \text{otherwise}
	   \end{aligned}
   \right.
  \end{align}

For technical reasons, we consider the {\em admissible control policy} defined below.
\begin{definition}[Admissible Control Policy]   \label{admissibledis}
A policy $\boldsymbol{\Omega}$ is admissible if the following requirements are satisfied:
\begin{itemize}
\item $\boldsymbol{\Omega}$ is a unichain policy, i.e., the controlled Markov chain $\left\{\boldsymbol{\chi}\left(t \right)\right\}$ under $\boldsymbol{\Omega}$ has a single recurrent class (and possibly some transient states) \cite{DPcontrol}.

\item The queueing system under $\boldsymbol{\Omega}$ is third-order stable in the sense that $\lim_{t\rightarrow \infty}\mathbb{E}^{\boldsymbol{\Omega}}[\sum_{k=1}^{K}Q_k^3(t)]<\infty$, where $\mathbb{E}^{\boldsymbol{\Omega}}$ means taking expectation w.r.t. the probability measure induced by the control policy $\boldsymbol{\Omega}$.\hfill \IEEEQED
\end{itemize}
\end{definition}

\subsection{Problem Formulation}
As a result, under an admissible control policy $\boldsymbol{\Omega}$, the average delay cost for the $k$-th D2D pair is given by
\begin{align}	\label{delay_cost}
	\overline{D}_k(\boldsymbol{\Omega})  = \limsup_{T \rightarrow \infty} \frac{1}{T} \sum_{t=0}^{T-1} \mathbb{E}^{\boldsymbol{\Omega}} \left[\frac{Q_k\left(t\right)}{\lambda_k} \right], \quad \forall k	
\end{align}	
Similarly, under {an admissible} control policy $\boldsymbol{\Omega}$,  the average power cost of the $k$-th D2D transmitter is given by
\begin{align}		\label{power_cost}
	\overline{P}_k(\boldsymbol{\Omega}) = \limsup_{T \rightarrow \infty} \frac{1}{T} \sum_{t=0}^{T-1} \mathbb{E}^{\boldsymbol{\Omega}}  \big[P_k(t)  \big], \quad \forall k
\end{align}

We formulate the dynamic power control problem for the delay-aware D2D system as follows:
\begin{problem}	[Power Control for Delay-Aware D2D Systems]  \label{IHAC_MDP}
The power control problem for the delay-aware D2D communications is formulated as
\begin{align}\label{formula}
\underset{\boldsymbol{\Omega}}{\min}~~~&L(\boldsymbol{\Omega})\\
=& \sum_{k=1}^K\Big(\beta_k\underbrace{\overline{D}_k(\boldsymbol{\Omega})}\limits_{\text{average} \atop \text{delay}} +  \gamma_k\underbrace{\overline{P}(\boldsymbol{\Omega})}\limits_{\text{average} \atop \text{power}}\Big)\nonumber\\
=& \limsup_{T \rightarrow \infty} \frac{1}{T} \sum_{t=0}^{T-1} \mathbb{E}^{\boldsymbol{\Omega}}  \left[c\left(\mathbf{Q}\left(t\right), \boldsymbol{\Omega}\left(\boldsymbol{\chi}\left(t\right) \right)\right)    \right]	\nonumber
\end{align}
where $c\left(\mathbf{Q}, \mathbf{P}\right)=\sum_{k=1}^K \left(\beta_k \frac{Q_k}{\lambda_k}+\gamma_k P_k \right)$. $\mbox{\boldmath$\beta$}=\{\beta_k>0: \forall k\}$ and $\mbox{\boldmath$\gamma$}=\{\gamma_k>0: \forall k\}$ are positive weights for the delay cost and the power cost respectively. ~\hfill\IEEEQED
\end{problem}

{Problem \ref{IHAC_MDP} embraces various optimization formulations such as minimizing the average delay subject to the average power constraint or minimizing the average transmit power subject to the average delay constraint. This is because these ``constrained optimization problems" have the same {\em Lagrangian function}, which is given by (\ref{formula}) in Problem \ref{IHAC_MDP}. The weights $\boldsymbol{\beta}$ and $\boldsymbol{ \gamma}$ are equivalent to the Lagrangian multipliers of the associated constraints.}
Also note that Problem \ref{IHAC_MDP} is an infinite horizon average cost MDP, which is known as a very difficult problem.

\subsection{Optimality Conditions for Power Control Problem}
Problem \ref{IHAC_MDP} is an MDP, and the associated  {\em Bellman equation} \cite{DPcontrol}  involves the entire system state $\boldsymbol{\chi}=(\boldsymbol{\sigma},\mathbf{H},\mathbf{Q})$.
Exploiting the i.i.d. properties of $\mathbf{H}(t)$ and $\boldsymbol{\sigma}(t)$, we obtain the following {\em equivalent Bellman equation}.

\begin{theorem}[Sufficient Conditions for Optimality]\label{theorem1}
For any given weights $\mbox{\boldmath$\beta$}$ and $\mbox{\boldmath$\gamma$}$, assume there exists a $(\theta^*,\{V^*(\mathbf{Q})\})$ that solves the following {\em equivalent Bellman equation}:
\begin{align}\label{bellman1}
	&\theta^*\tau+V^*(\mathbf{Q})  \hspace{4cm} 	\forall \mathbf{Q}\in \boldsymbol{\mathcal{Q}}\\
=&\mathbb{E}\bigg[ \min_{\boldsymbol{\Omega}(\boldsymbol{\chi})}\Big[c\big(\mathbf{Q}, \boldsymbol{\Omega}\big(\boldsymbol{\chi}\big)\big) \tau +\sum_{\mathbf{Q}'}\Pr \big[ \mathbf{Q}'\big| \boldsymbol{\chi}, \boldsymbol{\Omega}\big(\boldsymbol{\chi} \big) \big]V^*(\mathbf{Q}')  \Big] \bigg| \mathbf{Q} \bigg]\nonumber
\end{align}
Furthermore, for all admissible control policy $\boldsymbol{\Omega}$, $V^\ast$ satisfies the following \emph{transversality condition}:
\begin{equation}\label{trans1}
\lim_{T \rightarrow \infty} \frac{1}{T}\mathbb{E}^{\boldsymbol{\Omega}}\left[ V^\ast\left(\mathbf{Q}\left(T \right) \right)\right]=0
\end{equation}
Then ${\theta^*}=\underset{\boldsymbol{\Omega}}{\min} L(\boldsymbol{\Omega})$ is the optimal average cost, and $V^\ast\left(\mathbf{Q}\right)$ is the \emph{priority function} of the $K$ data flows. If $\boldsymbol{\Omega}^{\ast}\left(\boldsymbol{\chi}\right)$ attains the minimum of the R.H.S. of (\ref{bellman1}) for all $\mathbf{Q} \in \boldsymbol{\mathcal{Q}} $, then $\boldsymbol{\Omega}^*$ is the optimal control policy for Problem~\ref{IHAC_MDP}.
\hfill \IEEEQED
\end{theorem}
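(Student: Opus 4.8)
The plan is to read Theorem~\ref{theorem1} as a \emph{verification} result for the average-cost MDP and to prove it by a finite-horizon telescoping argument applied to the equivalent Bellman equation (\ref{bellman1}). Concretely, I would establish two inequalities: that $\theta^*$ lower-bounds $L(\boldsymbol{\Omega})$ for \emph{every} admissible policy $\boldsymbol{\Omega}$, and that this bound is \emph{attained} by the minimizing policy $\boldsymbol{\Omega}^*$. Together these give $\theta^*=\min_{\boldsymbol{\Omega}}L(\boldsymbol{\Omega})=L(\boldsymbol{\Omega}^*)$, which is exactly the claim; the function $V^*$ then emerges as the relative value (priority) function that accounts for the differential future cost of a queue state.

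For the lower bound, fix an arbitrary admissible $\boldsymbol{\Omega}$. Because the right-hand side of (\ref{bellman1}) minimizes over the per-stage action, replacing the minimizer by the fixed action $\boldsymbol{\Omega}(\boldsymbol{\chi})$ can only increase it, so $\theta^*\tau+V^*(\mathbf{Q})\le\mathbb{E}[\,c(\mathbf{Q},\boldsymbol{\Omega}(\boldsymbol{\chi}))\tau+\sum_{\mathbf{Q}'}\Pr[\mathbf{Q}'\mid\boldsymbol{\chi},\boldsymbol{\Omega}(\boldsymbol{\chi})]V^*(\mathbf{Q}')\mid\mathbf{Q}\,]$ for every $\mathbf{Q}$. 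Evaluating at $\mathbf{Q}=\mathbf{Q}(t)$ and taking the outer expectation $\mathbb{E}^{\boldsymbol{\Omega}}$ under the policy-induced measure, the inner conditional average over $(\boldsymbol{\sigma}(t),\mathbf{H}(t))$ composed with the queue transition collapses to $\mathbb{E}^{\boldsymbol{\Omega}}[V^*(\mathbf{Q}(t+1))]$, giving
\begin{align*}
\theta^*\tau \le{}& \mathbb{E}^{\boldsymbol{\Omega}}\big[c(\mathbf{Q}(t),\boldsymbol{\Omega}(\boldsymbol{\chi}(t)))\big]\tau \\
&+ \mathbb{E}^{\boldsymbol{\Omega}}[V^*(\mathbf{Q}(t+1))]-\mathbb{E}^{\boldsymbol{\Omega}}[V^*(\mathbf{Q}(t))].
\end{align*}
Summing over $t=0,\dots,T-1$ telescopes the $V^*$ terms; dividing by $T\tau$ and taking $\limsup_{T\to\infty}$ kills the initial term $V^*(\mathbf{Q}(0))/(T\tau)$ and, by the transversality condition (\ref{trans1}), the terminal term $\mathbb{E}^{\boldsymbol{\Omega}}[V^*(\mathbf{Q}(T))]/(T\tau)$. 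What remains is $\theta^*\le L(\boldsymbol{\Omega})$, and since $\boldsymbol{\Omega}$ was arbitrary, $\theta^*\le\min_{\boldsymbol{\Omega}}L(\boldsymbol{\Omega})$.

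For achievability, I would repeat the identical computation with $\boldsymbol{\Omega}^*$, the policy attaining the minimum in (\ref{bellman1}) for every $\mathbf{Q}$. Now (\ref{bellman1}) holds with \emph{equality}, so every inequality above becomes an equality and the telescoping identity yields $\theta^*=L(\boldsymbol{\Omega}^*)$, using that $\boldsymbol{\Omega}^*$ is itself admissible and obeys (\ref{trans1}). Combining the two directions gives $\theta^*=\min_{\boldsymbol{\Omega}}L(\boldsymbol{\Omega})=L(\boldsymbol{\Omega}^*)$, identifying $\theta^*$ as the optimal average cost and $\boldsymbol{\Omega}^*$ as an optimal stationary policy.

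I expect the main obstacle to be the rigorous control of the boundary term and the legitimacy of the limit interchange rather than the algebra. The transversality condition (\ref{trans1}) is assumed precisely so that $\frac{1}{T\tau}\mathbb{E}^{\boldsymbol{\Omega}}[V^*(\mathbf{Q}(T))]\to 0$, but one must check its compatibility with the growth of $V^*$: since the later perturbation analysis shows $V^*$ grows polynomially (essentially quadratically) in $\mathbf{Q}$, it is the third-order stability built into the admissibility in Definition~\ref{admissibledis} that guarantees the relevant moments $\mathbb{E}^{\boldsymbol{\Omega}}[V^*(\mathbf{Q}(t))]$ are finite, that the telescoping sum is well defined, and that the $\limsup$ passes through termwise. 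A secondary subtlety is verifying that the inner conditional expectation over $(\boldsymbol{\sigma},\mathbf{H})$ in (\ref{bellman1}) correctly composes with the law of $\mathbf{Q}(t)$ to reproduce $\mathbb{E}^{\boldsymbol{\Omega}}[V^*(\mathbf{Q}(t+1))]$; this relies on the i.i.d.\ structure of $\boldsymbol{\sigma}$ and $\mathbf{H}$ and on $V^*$ depending on $\mathbf{Q}$ alone, which is exactly the dimensionality reduction that makes (\ref{bellman1}) an \emph{equivalent} Bellman equation in the first place.
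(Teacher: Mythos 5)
Your proposal is correct, but it takes a different route from the paper. The paper's Appendix~A does not telescope anything: it simply invokes the standard verification result (\emph{Prop.~4.6.1} of \cite{DPcontrol}) for the Bellman equation posed on the \emph{full} state $\boldsymbol{\chi}=(\boldsymbol{\sigma},\mathbf{H},\mathbf{Q})$, and then obtains the ``equivalent'' Bellman equation (\ref{bellman1}) by taking expectations over $(\boldsymbol{\sigma},\mathbf{H})$ on both sides and defining $V^\ast(\mathbf{Q})=\mathbb{E}[V^\ast(\boldsymbol{\chi})\,|\,\mathbf{Q}]$ — i.e., citation plus a state-marginalization step. What you do instead is reprove the verification theorem from first principles, directly for the $\mathbf{Q}$-marginalized equation: the policy-suboptimality inequality, the tower-property collapse of the inner $(\boldsymbol{\sigma},\mathbf{H})$-average composed with the queue kernel into $\mathbb{E}^{\boldsymbol{\Omega}}[V^\ast(\mathbf{Q}(t+1))]$, the telescoping sum, and transversality killing the boundary term — this is in essence the proof of the proposition the paper cites, executed on the reduced state space. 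Each approach buys something: the paper's is shorter and makes transparent that (\ref{bellman1}) is a marginalization of the classical full-state equation; but, as stated, it establishes the implication ``full-state solution $\Rightarrow$ $\mathbf{Q}$-only solution,'' whereas the theorem assumes a solution of the $\mathbf{Q}$-only equation, so it implicitly relies on lifting that solution back to the full state space (straightforward, but unstated). Your argument avoids this lifting entirely, and it makes explicit exactly where the two hypotheses enter — transversality to kill the terminal term for every admissible policy, and admissibility (third-order stability) of $\boldsymbol{\Omega}^\ast$ to ensure the moments $\mathbb{E}^{\boldsymbol{\Omega}}[V^\ast(\mathbf{Q}(t))]$ are finite and the achievability direction closes — hypotheses whose roles are invisible behind the paper's citation. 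The one caveat you correctly flag, that $\boldsymbol{\Omega}^\ast$ must itself be admissible for the equality chain to yield $\theta^\ast=L(\boldsymbol{\Omega}^\ast)$, is an implicit assumption in the paper's statement as well, so it is not a gap relative to the paper.
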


\begin{proof}
Please refer to Appendix A.
\end{proof}
\begin{remark}	[Interpretation of {Theorem} \ref{theorem1}]
{At each stage when the queue length is $\mathbf{Q}(t)$, the optimal action has to strike a balance between the current cost and the future cost because the action taken will affect the future evolution of $\mathbf{Q}(t+1)$.}
Furthermore, based on the unichain property of the admission control policy, the solution obtained from Theorem \ref{theorem1} is unique \cite{DPcontrol}.~\hfill\IEEEQED
\end{remark}

\section{Low-Complexity Power Control Solution}
One key obstacle in deriving the optimal power control policy $\boldsymbol{\Omega}^\ast$ is to obtain the priority function for the Bellman equation in (\ref{bellman1}).  Conventional brute force value iteration or policy iteration algorithms can only give numerical solutions and have exponential complexity in $K$, which is highly undesirable. In this section, we shall exploit the interference filtering property of the MAC protocol and adopt perturbation theory to obtain a closed-form approximation of the priority function $V^\ast (\mathbf{Q})$ and derive the associated error bound. Based on that, we obtain a low complexity dynamic power control algorithm for the delay-aware D2D communications.

\subsection{Closed-Form Approximate Priority Function via Perturbation Analysis}
We adopt a calculus approach to obtain a closed-form approximate priority function. We first have the following theorem for solving the Bellman equation in (\ref{bellman1}).
{
\begin{theorem}	[Calculus Approach for Solving (\ref{bellman1})]	\label{HJB1}
	Assume there exist $c^\infty$ and  $J\left( \mathbf{Q};L^\delta \right)$ of class $\mathcal{C}^2(\mathbb{R}_+^K)$   that satisfy
	\begin{itemize}
		\item the following partial differential equation (PDE):
		\begin{align}	
	& \mathbb{E}\Bigg[ \min_{\boldsymbol{\Omega}\left(\boldsymbol{\chi} \right)}\bigg[ \sum_{k=1}^K \left(\beta_k \frac{Q_k}{\lambda_k}+\gamma_k P_k \right)- c^{\infty} \hspace{0.8cm} \forall \mathbf{Q} \in \mathbb{R}_+^K \nonumber\\
&+ \sum_{k=1}^K \bigg( \frac{\partial  J \left(\mathbf{Q};L^\delta \right)} {\partial Q_k} \left(\lambda_k - \sigma_kC_k\big(\mathbf{H},\mathbf{P} \big) \right) \bigg)\bigg]\Bigg| \mathbf{Q} \Bigg] =0	\label{bellman3}
\end{align}
with boundary condition $J\left(\mathbf{0};L^\delta \right)=0$.
        \item $\Big\{\frac{\partial J\left(\mathbf{Q}; L^\delta \right)}{\partial Q_k}: \forall k \Big\}$  are increasing functions of all $Q_k$.
		\item $J\left(\mathbf{Q}; L^\delta \right)=\mathcal{O}\left(\|\mathbf{Q}\|^3\right)$.
	\end{itemize}
	Then, we have
\begin{equation}	
\theta^\ast= c^{\infty}+o(1), V^\ast\left(\mathbf{Q} \right)=J \left(\mathbf{Q};L^\delta  \right)+o(1), \forall \mathbf{Q} \in \boldsymbol{\mathcal{Q}}	\label{15resu}
\end{equation}
where the error term $o(1)$ asymptotically goes to zero  for sufficiently small  $\tau$.~\hfill\IEEEQED
\end{theorem}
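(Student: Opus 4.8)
The plan is to run a verification argument that bridges the discrete-time Bellman equation (\ref{bellman1}) and the continuous PDE (\ref{bellman3}). Concretely, I would substitute the candidate pair $(c^\infty, J(\mathbf{Q};L^\delta))$ into the right-hand side of (\ref{bellman1}) and show that equality holds up to an additive residual of order $\mathcal{O}(\tau^2)$; since the Bellman equation carries an explicit factor $\tau$ on the cost and on $\theta^\ast$, dividing through by $\tau$ turns this residual into an $o(1)$ term that vanishes as $\tau\to 0$. The device that links the two formulations is a second-order Taylor expansion of $J$ along the queue trajectory, which is legitimate precisely because $J\in\mathcal{C}^2(\mathbb{R}_+^K)$.

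First I would fix a realization of $(\boldsymbol{\sigma},\mathbf{H})$ and a power action $\mathbf{P}$, let $\mathbf{Q}'$ be the next-slot queue from (\ref{queueing}), and write
\[
J(\mathbf{Q}') = J(\mathbf{Q}) + \sum_{k=1}^K \frac{\partial J}{\partial Q_k}\,(Q_k'-Q_k) + \frac{1}{2}\sum_{k,j}\frac{\partial^2 J}{\partial Q_k\,\partial Q_j}\bigg|_{\widetilde{\mathbf{Q}}}(Q_k'-Q_k)(Q_j'-Q_j),
\]
for an intermediate point $\widetilde{\mathbf{Q}}$. In the interior, where the $\max$ in (\ref{queueing}) is inactive, $Q_k'-Q_k = (A_k-\sigma_k C_k(\mathbf{H},\mathbf{P}))\tau$, so the first-order term is $\mathcal{O}(\tau)$ and the second-order term is $\mathcal{O}(\tau^2)$. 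Taking the expectation over $\mathbf{A}$ inside the transition kernel and using $\mathbb{E}[A_k]=\lambda_k$, then the outer expectation over $(\boldsymbol{\sigma},\mathbf{H})$, the first-order contribution reproduces exactly the drift $\sum_k \frac{\partial J}{\partial Q_k}(\lambda_k-\sigma_k C_k)$ inside (\ref{bellman3}). Cancelling $J(\mathbf{Q})$ from both sides and invoking the PDE (\ref{bellman3}) then shows that the right-hand side of (\ref{bellman1}) evaluated at $J$ equals $c^\infty\tau + J(\mathbf{Q}) + \mathcal{O}(\tau^2)$, which is the asserted relation $\theta^\ast = c^\infty + o(1)$, $V^\ast(\mathbf{Q}) = J(\mathbf{Q};L^\delta) + o(1)$.

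The substantive work, and where I expect the main obstacle, is controlling the residual \emph{uniformly in expectation} rather than merely pointwise, and handling the boundary created by the $\max$ operator. For the second-order remainder, the growth condition $J=\mathcal{O}(\|\mathbf{Q}\|^3)$ forces the Hessian to grow at most linearly in $\|\mathbf{Q}\|$; combined with the existence of the moment generating function of $A_k$ (hence all its moments) and the third-order stability of admissible policies, $\lim_{t\to\infty}\mathbb{E}^{\boldsymbol{\Omega}}[\sum_k Q_k^3(t)]<\infty$, the expected second-order term is finite and stays $\mathcal{O}(\tau^2)$. The delicate point is the boundary: at $Q_k=0$ one has $Q_k'-Q_k = A_k\tau$, which differs from the interior drift $(A_k-\sigma_k C_k)\tau$ by $\sigma_k C_k\tau = \mathcal{O}(\tau)$ — a first-order, not second-order, discrepancy. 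What rescues the estimate is again the growth condition: $J=\mathcal{O}(\|\mathbf{Q}\|^3)$ with smoothness gives $\partial J/\partial Q_k=\mathcal{O}(\|\mathbf{Q}\|^2)$ near the origin, so this $\mathcal{O}(\tau)$ clipping error is weighted by a derivative that vanishes quadratically at the boundary, leaving a net $o(1)$ contribution after averaging. The monotonicity assumption that each $\partial J/\partial Q_k$ is increasing in every $Q_k$ is what guarantees the per-stage minimizer in (\ref{bellman3}) is well-defined and induces a stabilizing (hence admissible) policy, so that the stability moments above may legitimately be invoked.

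Finally I would close the loop back to the true priority function. Having shown $(c^\infty, J)$ satisfies (\ref{bellman1}) up to $o(\tau)$, I would verify that $J$ inherits the hypotheses of Theorem \ref{theorem1}: the $\mathcal{O}(\|\mathbf{Q}\|^3)$ bound together with third-order stability yields the transversality condition (\ref{trans1}). The uniqueness of the Bellman solution under the unichain property then forces $(\theta^\ast,V^\ast)$ and $(c^\infty,J)$ to agree up to the stated $o(1)$ error. The overall difficulty is thus concentrated in the boundary analysis and in upgrading the pointwise $\mathcal{O}(\tau^2)$ Taylor remainder to a uniform-in-expectation $\mathcal{O}(\tau^2)$ estimate under the stationary measure.
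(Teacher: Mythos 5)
Your verification skeleton---second-order Taylor expansion of $J$ along the queue update, first-order drift reproducing the PDE, an $\mathcal{O}(\tau^2)$ residual, and then uniqueness of the Bellman solution to close the loop---is essentially Part~1 of the paper's proof. The paper routes this through an intermediate ``approximate Bellman equation,'' proves via a Bellman-operator comparison that any of its solutions is within $o(1)$ of the exact one, and uses the same contradiction-from-uniqueness step you invoke; your transversality observation (the $\mathcal{O}(\|\mathbf{Q}\|^3)$ growth bound combined with third-order stability of admissible policies) also matches the paper exactly.

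The genuine gap is your one-sentence treatment of admissibility: you assert that the monotonicity of $\partial J/\partial Q_k$ ``guarantees the per-stage minimizer \ldots induces a stabilizing (hence admissible) policy.'' That claim is precisely what requires---and in the paper receives---a substantive proof, and monotonicity alone does not deliver it. The paper's Appendix~B devotes most of its length to this: it takes $L(\mathbf{Q})=J(\mathbf{Q})$ as a Lyapunov function, uses monotonicity only to lower-bound the Lyapunov drift by the queue drift, then compares the induced policy $\boldsymbol{\Omega}^{J\ast}$ against a stationary randomized QSI-independent policy with rate slack $\boldsymbol{\kappa}$ (which exists because the arrival vector is strictly interior to the stability region, citing Neely)---a comparison that is legitimate exactly because $\boldsymbol{\Omega}^{J\ast}$ minimizes the drift-plus-cost expression in the PDE---to conclude negative drift for large queues, and finally invokes the Kingman bound on the semi-invariant moment generating function to obtain exponential queue tails and hence $\mathbb{E}[Q_k^3]<\infty$, i.e.\ third-order stability. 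Without this chain the conclusion $V=J$, $\theta=c^\infty$ does not follow: the sufficiency framework of Theorem~1 only applies when the minimizing policy lies in the admissible class, and your own remainder estimates (which appeal to ``third-order stability of admissible policies'') presuppose for $\boldsymbol{\Omega}^{J\ast}$ the very stability you never establish. The missing idea, concretely, is the drift comparison with a randomized stabilizing policy together with the Kingman-bound moment argument.
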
}

\begin{proof}
	please refer to Appendix B.
\end{proof}

{Theorem \ref{HJB1} suggests that if we can solve for the PDE in (\ref{bellman3}), then the solution $(J \left(\mathbf{Q};L^\delta  \right), c^{\infty})$ is only $o(1)$ away from the solution of the Bellman equation $(V^*(\mathbf{Q}), \theta^*)$.} Before we solve the $K$-dimensional PDE in (\ref{bellman3}), we first recognize that due to the interference filtering property of the MAC protocol in Assumption \ref{a_MAC}, the cross-channel path gain of all the active D2D flows  are quite weak and the worst-case interfering path gain is $L^\delta $. Note that the solution of  (\ref{bellman3})  depends on the worst-case cross-channel path gain $L^\delta$ and, hence, the $K$-dimensional PDE in (\ref{bellman3}) can be regarded as a perturbation of a {\em base system} defined below.
\begin{definition}[Base System]\label{base}
A base system is characterized by the PDE in (\ref{bellman3}) with $L^\delta=0$.~\hfill \IEEEQED
\end{definition}

We then study the base system and use  $J(\mathbf{Q};0)$ to obtain a closed-form approximation of $J(\mathbf{Q};L^\delta)$.
%
We have the following lemma summarizing the priority function $J(\mathbf{Q};0)$  of the base system.
\begin{lemma}	[Decomposable Structure of $J(\mathbf{Q};0)$]\label{theorem3}
The solution  $J(\mathbf{Q};0)$ for the base system has the following decomposable structure:
\begin{equation}   \label{linearA}
J\left(\mathbf{Q};0 \right)= \sum_{k=1}^K J_k \left(Q_k \right)
\end{equation}
where  $J_k \left(Q_k \right)$ is the \emph{per-flow priority function} for the $k$-th data flow given by
\begin{equation}	\label{perflow}
 \left\{
\begin{aligned}	
Q_k(y)=&  \frac{\lambda_k}{\beta_k} \left(\frac{a_k}{(|\mathcal{N}_k(\delta)|+1) \ln 2}E_1\left(\frac{a_k}{y}\right)- \lambda_ky \right. \\
& \left. -\frac{y}{(|\mathcal{N}_k(\delta)|+1) \ln 2}\left(e^{-\frac{a_k}{y}}-E_1\left(\frac{a_k}{y}\right)\right)+c_k^\infty\right)		 \\
J_k(y)=&   \frac{\lambda_k}{\beta_k} \left(\frac{1}{4 (|\mathcal{N}_k(\delta)|+1) \ln 2}E_1\left(\frac{a_k}{y}\right)\left(2y^2-a_k^2\right) \right.  \\
& \left. -\frac{y(y-a_k)}{4 (|\mathcal{N}_k(\delta)|+1) \ln 2}e^{-\frac{a_k}{y}}-\frac{\lambda_ky^2}{2}\right)+b_k
\end{aligned}
\right.
\end{equation}
where $a_k \triangleq \frac{N_0{\Gamma}\gamma_k\ln 2}{L_{kk}}$. $c_k^\infty=\frac{1}{(|\mathcal{N}_k(\delta)|+1) \ln 2}\left(d_k e^{-\frac{a_k}{d_k}} -a_k E_1\left(\frac{a_k}{d_k}\right)\right)$, where $d_k$ satisfies $\frac{1}{ (|\mathcal{N}_k(\delta)|+1) \ln 2}E_1\left(\frac{a_k}{d_k}\right)=\lambda_k$. $E_1(z)  \triangleq \int_1^{\infty} \frac{e^{-tz}}{t}\mathrm{d}t = \int_z^{\infty} \frac{e^{-t}}{t}\mathrm{d}t $.  $b_k$ is chosen to satisfy\footnote{To find $b_k$, firstly solve $Q_k(y_k^0)=0$ using one-dimensional search techniques (e.g., bisection method). Then $b_k$ is chosen such that $J_k(y_k^0)=0$.} the boundary condition $J_k(0)=0$.~\hfill\IEEEQED
\end{lemma}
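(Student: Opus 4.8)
The plan is to use the fact that setting $L^\delta=0$ (the base system of Definition~\ref{base}) removes all cross-channel interference, so that each rate in (\ref{capacity}) collapses to $C_k=\log_2\!\big(1+\tfrac{H_{kk}P_k}{\Gamma N_0}\big)$, a function of flow $k$'s own variables only. This decouples the $K$ flows in the PDE (\ref{bellman3}), so I would postulate the additive ansatz $J(\mathbf{Q};0)=\sum_k J_k(Q_k)$, giving $\partial J/\partial Q_k=J_k'(Q_k)$, and set $c^\infty=\sum_k c_k^\infty$. Substituting into (\ref{bellman3}) and noting that the control may depend on the full realization $(\sigma_k,H_{kk},Q_k)$, the conditional expectation and the minimization separate---each summand depending only on $(P_k,\sigma_k,H_{kk},Q_k)$---into $K$ independent scalar stationarity relations, one per flow.

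For a fixed flow the inner problem is $\min_{P_k\ge0}\big[\gamma_kP_k-J_k'(Q_k)\,\sigma_kC_k\big]$. When $\sigma_k=0$ the minimizer is $P_k=0$; when $\sigma_k=1$ the objective is convex in $P_k$ and the first-order condition gives the water-filling form $P_k^\star=\big[\tfrac{J_k'(Q_k)}{\gamma_k\ln2}-\tfrac{\Gamma N_0}{H_{kk}}\big]^+$. Writing $y\triangleq J_k'(Q_k)$, $H_{kk}=L_{kk}t$ with $t\sim\mathrm{Exp}(1)$, and $a_k=\tfrac{N_0\Gamma\gamma_k\ln2}{L_{kk}}$, the active region is $t>a_k/y$, where one checks $1+\tfrac{H_{kk}P_k^\star}{\Gamma N_0}=ty/a_k$ so that $C_k=\tfrac{1}{\ln2}\ln(ty/a_k)$. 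Taking the expectation over $t$, weighted by the Bernoulli access factor $\nu_k=\tfrac{1}{|\mathcal{N}_k(\delta)|+1}$ from Assumption~\ref{a_MAC} (which survives because $L^\delta=0$ kills only the interference, not the channel-access statistics), reduces every integral to the exponential integral via $\int_z^\infty e^{-t}/t\,\mathrm{d}t=E_1(z)$ and $\int_z^\infty\ln(t/z)e^{-t}\,\mathrm{d}t=E_1(z)$. This yields the scalar relation $\tfrac{\beta_k}{\lambda_k}Q_k-c_k^\infty+\lambda_k y+\tfrac{\nu_k}{\ln2}\big[ye^{-a_k/y}-(a_k+y)E_1(a_k/y)\big]=0$.

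The crux, and the step I expect to be the main obstacle, is that this relation is implicit in $J_k'$ and cannot be solved explicitly for $J_k'(Q_k)$; instead I would solve it parametrically, treating $y=J_k'(Q_k)$ as the independent variable. Solving for $Q_k$ gives the first displayed formula $Q_k(y)$, and integrating $\mathrm{d}J_k=y\,\mathrm{d}Q_k$ along the parameter (integration by parts, again producing $E_1$ and polynomial terms) gives $J_k(y)$. The integration constants are then fixed at the empty-queue boundary $Q_k=0$: the zero-drift condition there (expected service rate equal to the arrival rate) gives $\tfrac{1}{(|\mathcal{N}_k(\delta)|+1)\ln2}E_1(a_k/d_k)=\lambda_k$, determining the marginal value $d_k=J_k'(0)$; evaluating the scalar relation at $Q_k=0,\,y=d_k$ then yields the stated $c_k^\infty$; and $b_k$ is chosen so that $J_k=0$ at $y=d_k$. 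The delicate part is to justify the parametrization, i.e. that $y\mapsto Q_k(y)$ is monotone and sweeps out all of $[0,\infty)$, so that $(Q_k(y),J_k(y))$ genuinely traces a single-valued $\mathcal{C}^2$ function $J_k(Q_k)$.

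Finally I would discharge the two remaining hypotheses of Theorem~\ref{HJB1} for $J(\mathbf{Q};0)=\sum_kJ_k(Q_k)$. Monotonicity of $\partial J/\partial Q_k=J_k'$ reduces to $Q_k'(y)>0$ for $y>d_k$, which follows from the monotonicity of $E_1$ together with the defining relation for $d_k$ (the expected-rate term crosses $\lambda_k$ exactly at $y=d_k$). The growth bound $J(\mathbf{Q};0)=\mathcal{O}(\|\mathbf{Q}\|^3)$ follows from the large-$y$ asymptotics $Q_k(y)\sim\tfrac{\lambda_k\nu_k}{\beta_k\ln2}\,y\ln y$, so $y$ grows sub-linearly in $Q_k$ and $J_k$ grows slower than $Q_k^2$. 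Together these confirm that the claimed closed form is an admissible solution of the base-system PDE, establishing the decomposable structure.
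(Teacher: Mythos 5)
Your proposal is correct and follows essentially the same route as the paper's Appendix~C: an additive ansatz that decouples the PDE into per-flow optimality equations, the water-filling minimizer $P_k^\ast=\big(\tfrac{J_k'(Q_k)\sigma_k}{\gamma_k\ln 2}-\tfrac{\Gamma N_0}{H_{kk}}\big)^+$, Bernoulli/exponential expectations producing the $E_1$ terms, and a parametric solution of the resulting implicit ODE in $J_k'$ with the same boundary treatment fixing $d_k$, $c_k^\infty$ and $b_k$. Your extra verifications (monotonicity of $Q_k(y)$ for $y>d_k$, the identification $y_k^0=d_k$, and the cubic growth bound needed by Theorem~\ref{HJB1}) go slightly beyond the paper, which at those points simply cites the ODE handbook and asserts the boundary conditions, but the core argument is identical.
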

\begin{proof}
please refer to Appendix C.
\end{proof}

Note that when $L^\delta = 0$, the interference network has $L_{kj} = 0$ for all $k \neq j$ with $d_{kj}> \delta$ and, hence, there is no interference between the active D2D ndoes. As a result, the $K$ D2D flows are totally decoupled and the system is equivalent to a decoupled system with $K$ independent D2D flows. That is  why the priority function $J\left(\mathbf{Q};0 \right)$ in the base system has the decomposable structure in Lemma \ref{theorem3}.

We then analyze the asymptotic property of the per-flow priority function $J_k \left( Q_k \right)$ in Corollary~\ref{cor2}.
\begin{corollary}	[Asymptotic Property of $J_k \left( Q_k \right)$]   	\label{cor2}
\begin{align}	\label{asympotic}
J_k\left(Q_k \right) &= \frac{ \beta_k (|\mathcal{N}_k(\delta)|+1)}{2\lambda_k}\frac{Q_k^2}{\log_2 \left( Q_k \right)} +o\left(\frac{Q_k^2}{\log_2 \left( Q_k \right)} \right),\nonumber\\
&\hspace{5cm} \text{as } Q_k \rightarrow \infty
\end{align}
\end{corollary}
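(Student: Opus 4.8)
The plan is to exploit the parametric representation in Lemma~\ref{theorem3}, where both $Q_k$ and $J_k$ are expressed as explicit functions of a single auxiliary parameter $y$ (which is precisely the marginal cost $\partial J_k/\partial Q_k$, so the monotonicity hypothesis of Theorem~\ref{HJB1} ensures that $y$ increases with $Q_k$ and that $y\to\infty$ as $Q_k\to\infty$). Writing $n_k := |\mathcal{N}_k(\delta)|+1$ for brevity, rather than inverting the transcendental relation $Q_k=Q_k(y)$ in closed form I would extract the leading-order growth of $Q_k(y)$ and $J_k(y)$ separately as $y\to\infty$, eliminate $y$ asymptotically, and read off the claimed rate.

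First I would substitute the small-argument expansion of the exponential integral, $E_1(z)=-\gamma-\ln z + z + O(z^2)$ as $z\to 0^+$, evaluated at $z=a_k/y\to 0$, so that $E_1(a_k/y)=\ln y - \ln a_k - \gamma + O(1/y)\sim (\ln 2)\log_2 y$. Inserting this into the expression for $Q_k(y)$ in (\ref{perflow}), the contribution $\frac{y}{n_k\ln 2}E_1(a_k/y)$ dominates both the $-\lambda_k y$ term and the $O(\ln y)$ pieces, giving $Q_k(y)=\frac{\lambda_k}{\beta_k n_k}\,y\log_2 y\,(1+o(1))$. The same substitution in $J_k(y)$ shows that the $\frac{1}{4 n_k\ln 2}E_1(a_k/y)\cdot 2y^2$ term dominates the remaining $O(y^2)$ contributions, giving $J_k(y)=\frac{\lambda_k}{2\beta_k n_k}\,y^2\log_2 y\,(1+o(1))$. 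Taking the ratio eliminates the logarithmic factor and yields $J_k(y)=\frac{y}{2}Q_k(y)(1+o(1))$.

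Next I would invert the growth law for $Q_k$. Taking $\log_2$ of $Q_k\sim \frac{\lambda_k}{\beta_k n_k}y\log_2 y$ gives $\log_2 Q_k=\log_2 y+\log_2\log_2 y+O(1)\sim\log_2 y$, since $\log_2 y$ dominates both the doubly-logarithmic and constant corrections. Substituting $\log_2 y\sim\log_2 Q_k$ back into the growth law produces $y\sim\frac{\beta_k n_k}{\lambda_k}\frac{Q_k}{\log_2 Q_k}$, and plugging this into $J_k\sim\frac{y}{2}Q_k$ delivers the claimed asymptotic $J_k(Q_k)=\frac{\beta_k n_k}{2\lambda_k}\frac{Q_k^2}{\log_2 Q_k}+o\!\left(\frac{Q_k^2}{\log_2 Q_k}\right)$.

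I expect the main obstacle to be tracking the error terms cleanly through the inversion: the equivalence $\log_2 Q_k\sim\log_2 y$ discards a $\log_2\log_2 y$ term, and I must verify that this slowly varying correction, together with the $O(1/y)$ tail of the $E_1$ expansion and the subleading $O(y^2)$ parts of $J_k$, perturbs the final expression only at the $o(Q_k^2/\log_2 Q_k)$ level and does not corrupt the leading constant $\frac{\beta_k n_k}{2\lambda_k}$. A safe route is to lock in the leading factor through the ratio $J_k/Q_k\sim y/2$, which is insensitive to the constants $\ln a_k$ and $\gamma$, absorb all logarithmic corrections into one $(1+o(1))$ factor, and only then apply the crude equivalence $\log_2 y\sim\log_2 Q_k$ in the final substitution.
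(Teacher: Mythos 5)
Your proposal is correct, and it reaches the coefficient by a genuinely different route than the paper. Both arguments start identically: from the parametric solution (\ref{perflow}) in Lemma~\ref{theorem3} and the small-argument expansion of $E_1$ in (\ref{asypototoc}), one reads off the growth of $Q_k(y)$ and $J_k(y)$ as $y\to\infty$. But the paper deliberately keeps only \emph{order} information at this stage ($Q_k(y)=\mathcal{O}(y\ln y)$, $J_k(y)=\mathcal{O}(y^2\ln y)$, inverted via the Lambert $W$ function to get $J_k(Q_k)=\mathcal{O}(Q_k^2/\ln Q_k)$), and then pins down the constant $\frac{\beta_k(|\mathcal{N}_k(\delta)|+1)}{2\lambda_k}$ by returning to the per-flow ODE (\ref{finalode}): expanding it yields $J_k'(Q_k)\ln\big(J_k'(Q_k)\big)=\frac{\beta_k(|\mathcal{N}_k|+1)\ln 2}{\lambda_k}Q_k+o(Q_k)$ as in (\ref{bellmanE1}), and a sandwich/proportionality argument on $J_k'$ then forces the claimed coefficient. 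You instead never leave the parametric representation: you retain the explicit leading constants in the expansions of $Q_k(y)$ and $J_k(y)$, lock in the factor $\tfrac12$ through the ratio $J_k(y)/Q_k(y)=\tfrac{y}{2}\,(1+o(1))$, and close with the asymptotic inversion $\log_2 y\sim\log_2 Q_k$, with no Lambert $W$ and no appeal to the ODE. Your route is more self-contained and mechanically produces the exact constant, avoiding the paper's somewhat loose ``$\Delta,\Delta'\propto\cdots\Rightarrow\delta,\delta'\propto\cdots$'' step (where the factor $\tfrac12$ implicitly comes from integrating $Q_k/\ln Q_k$); the paper's route, by contrast, shows the coefficient is dictated by the optimality equation itself using only order estimates from the closed form, which is the more robust argument if the explicit parametric solution were unavailable (e.g., under a different fading distribution, as contemplated in the paper's extension remark). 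Your handling of the delicate points --- the discarded $\log_2\log_2 y$ correction, the $O(1/y)$ tail of $E_1$, and the subleading $O(y^2)$ terms --- is sound, since each enters multiplicatively as $(1+o(1))$ and therefore perturbs the result only at the $o\big(Q_k^2/\log_2 Q_k\big)$ level.
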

\begin{proof}
Please refer to Appendix D.
\end{proof}

Next, we study the PDE in (\ref{bellman3}) for large $\delta$. Note that large $\delta$ corresponds to small cross-channel path gains within the set of active D2D nodes. Hence, $J(\mathbf{Q};L^\delta)$ can be considered as a perturbation of the solution of the base system $J(\mathbf{Q};0)$. Using perturbation analysis, we establish the following theorem on the approximation of  $J(\mathbf{Q};L^\delta)$:
\begin{theorem}	[First Order Approximation of $J\left(\mathbf{Q}; L^\delta \right)$]	\label{theorem4}
	 $J\left(\mathbf{Q}; L^\delta \right)$ can be approximated by $J\left(\mathbf{Q}; 0 \right)$,  and the first order perturbation term is given by
\begin{align}	\label{Jerror}
    J\left(\mathbf{Q}; L^\delta \right)=& J\left(\mathbf{Q}; 0 \right) + \sum_{k=1}^K \sum_{j \neq k \atop j \notin \mathcal{N}_k(\delta)}   \Big(\frac{D_{kj}L_{kj}Q_k^2 Q_j}{(\log_2 Q_k)^2 \log_2 Q_j}\nonumber\\
&+o\left(\frac{D_{kj}L_{kj}Q_k^2 Q_j}{(\log_2 Q_k)^2 \log_2 Q_j}\right)   \Big)+ \mathcal{O}\left(\frac{1}{\delta^4} \right)
\end{align}
where $D_{kj}=\frac{\beta_k\beta_j (|\mathcal{N}_k(\delta)|+1) }{2 (\ln2)   \lambda_k \lambda_j \gamma_j N_0} $.~\hfill\IEEEQED
\end{theorem}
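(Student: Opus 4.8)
The plan is to treat the $K$-dimensional PDE (\ref{bellman3}) as a regular perturbation of the base system of Definition~\ref{base}, with the cross-channel path gains of the simultaneously active flows playing the role of the small parameter. By Assumption~\ref{long} every interferer lying outside the carrier-sensing range satisfies $d_{kj}>\delta$, so $L_{kj}=\mathcal{O}(1/\delta^2)$ for all $j\neq k$ with $j\notin\mathcal{N}_k(\delta)$. I would therefore posit the regular expansion
\[
J(\mathbf{Q};L^\delta)=J(\mathbf{Q};0)+\Delta J(\mathbf{Q})+\mathcal{O}\big((L^\delta)^2\big),
\]
where $\Delta J$ gathers the terms linear in $\{L_{kj}\}$, together with the companion expansion $c^\infty=c^\infty_{(0)}+c^\infty_{(1)}+\cdots$ and an analogous expansion of the minimizing power whose leading term $\mathbf{P}^\ast$ solves the base system. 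Substituting into (\ref{bellman3}) and matching powers, the $\mathcal{O}(1)$ balance reproduces exactly the base-system PDE whose solution is the decomposable $J(\mathbf{Q};0)=\sum_kJ_k(Q_k)$ of Lemma~\ref{theorem3}; the task then reduces to identifying $\Delta J$ and showing the discarded remainder is $\mathcal{O}((L^\delta)^2)=\mathcal{O}(1/\delta^4)$.

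The first concrete step is to linearize the rate in the interference. Writing the interference at receiver $k$ as $I_k=\sum_{j\neq k}H_{kj}P_j$ and Taylor-expanding the logarithm in (\ref{capacity}) about $I_k=0$ gives
\[
C_k(\mathbf{H},\mathbf{P})=\log_2\!\Big(1+\tfrac{1}{\Gamma}\tfrac{H_{kk}P_k}{N_0}\Big)-\frac{1}{\ln2}\,\frac{H_{kk}P_k/\Gamma}{N_0\big(N_0+H_{kk}P_k/\Gamma\big)}\,I_k+\mathcal{O}(I_k^2),
\]
the leading term being the interference-free rate of the base system. Because $\mathbf{P}^\ast$ already minimizes the base-system bracket, the envelope theorem lets me evaluate the first-order variation of the inner minimum at $\mathbf{P}^\ast$, so the perturbation of the minimizer enters only at $\mathcal{O}((L^\delta)^2)$ and is dropped. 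Collecting the $\mathcal{O}(L_{kj})$ terms, using that $H_{kj}$ is exponential with mean $L_{kj}$ and independent of both $H_{kk}$ and $P_j^\ast$ (the latter depending only on $H_{jj}$), and noting that conditioning on $\sigma_k=1$ the carrier-sensing constraint of Assumption~\ref{a_MAC} forces $\sigma_j=0$ for every $j\in\mathcal{N}_k(\delta)$ so that only interferers with $j\notin\mathcal{N}_k(\delta)$ survive, yields a \emph{linear} first-order PDE for $\Delta J$ with a pair-separable source,
\[
\sum_{k=1}^K\frac{\partial\Delta J}{\partial Q_k}\,\mathbb{E}\big[\lambda_k-\sigma_kC_k^{(0)}\big]+c^\infty_{(1)}=\sum_{k=1}^K\;\sum_{j\neq k,\,j\notin\mathcal{N}_k(\delta)}\frac{\partial J_k}{\partial Q_k}\,R_k\,L_{kj}\,\mathbb{E}[P_j^\ast],
\]
where $R_k=\mathbb{E}\big[\sigma_k\,\tfrac{1}{\ln2}\tfrac{H_{kk}P_k^\ast/\Gamma}{N_0(N_0+H_{kk}P_k^\ast/\Gamma)}\big]$ is the averaged sensitivity of the rate to interference. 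The separability of the source confirms that $\Delta J$ splits into a sum over interfering pairs $(k,j)$ weighted by $L_{kj}$, matching the structure of (\ref{Jerror}).

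To extract the explicit form I would solve this equation in the heavy-traffic regime using the asymptotics already in hand. The base-system optimality condition $\gamma_k=\frac{\partial J_k}{\partial Q_k}\sigma_k\frac{1}{\ln2}\frac{H_{kk}/(\Gamma N_0)}{1+H_{kk}P_k/(\Gamma N_0)}$ gives a water-filling power $P_k^\ast=\big(\frac{\sigma_k}{\gamma_k\ln2}\frac{\partial J_k}{\partial Q_k}-\frac{\Gamma N_0}{H_{kk}}\big)^+$; inserting the Corollary~\ref{cor2} asymptote $\frac{\partial J_k}{\partial Q_k}\sim\frac{\beta_k(|\mathcal{N}_k(\delta)|+1)}{\lambda_k}\frac{Q_k}{\log_2Q_k}$ and averaging $\sigma_k$ over $\nu_k=1/(|\mathcal{N}_k(\delta)|+1)$ produces $\mathbb{E}[P_j^\ast]\sim\frac{\beta_j}{\gamma_j\lambda_j\ln2}\frac{Q_j}{\log_2Q_j}$, which supplies the $Q_j/\log_2Q_j$ factor, while the same asymptote makes $R_k$ tend to the constant $\nu_k/(N_0\ln2)$ and the service rate grow like $\mathbb{E}[\sigma_kC_k^{(0)}]\sim\nu_k\log_2Q_k$. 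Dividing the source by this drift and integrating once in $Q_k$ then produces the $Q_k^2/(\log_2Q_k)^2$ factor, so that $\Delta J$ inherits precisely the monomial $\frac{Q_k^2Q_j}{(\log_2Q_k)^2\log_2Q_j}$; collecting the resulting constants yields the coefficient $D_{kj}$ in (\ref{Jerror}).

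I expect the main obstacle to be twofold. First, the envelope-theorem step must be justified despite the $\min$ sitting inside an expectation and the nonsmooth $(\cdot)^+$ truncation in $P_k^\ast$; I would confine the argument to the heavy-traffic region where the water level is active and $\mathbf{Q}\mapsto\mathbf{P}^\ast$ is smooth, and control the small-$\mathbf{Q}$ and boundary contributions using the third-order stability required in Definition~\ref{admissibledis} together with the transversality already invoked for the Bellman equation. Second, and more delicate, is the dominant-balance analysis needed to solve the first-order PDE: the two advection terms $\frac{\partial\Delta J}{\partial Q_k}$ and $\frac{\partial\Delta J}{\partial Q_j}$ generate monomials of different logarithmic order, so one must identify the leading balance, verify self-consistency of the posited $\frac{Q_k^2Q_j}{(\log_2Q_k)^2\log_2Q_j}$ ansatz, and absorb the mismatched pieces into the $o(\cdot)$ terms of (\ref{Jerror}). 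Finally, the remainder bound requires showing that both the quadratic term $\mathcal{O}(I_k^2)$ in the rate and the neglected minimizer perturbation contribute only $\mathcal{O}((L^\delta)^2)=\mathcal{O}(1/\delta^4)$, which is where the careful $o(\cdot)$ bookkeeping in (\ref{Jerror}) will be most demanding.
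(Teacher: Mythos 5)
Your proposal follows essentially the same route as the paper's own proof in Appendix~E: a first-order Taylor/perturbation expansion of the PDE in (\ref{bellman3}) about $L_{kj}=0$ evaluated at the base-system optimizer (the paper invokes parametric optimization analysis, which is exactly your envelope-theorem step), leading to a linear first-order PDE for the pairwise coupling terms with the same interference-sensitivity source, which is then solved asymptotically using the base-system expectations (\ref{calD1})--(\ref{calD2}) and the Corollary~\ref{cor2} asymptotics to produce the monomial $Q_k^2Q_j/((\log_2 Q_k)^2\log_2 Q_j)$ and the $\mathcal{O}((L^\delta)^2)=\mathcal{O}(1/\delta^4)$ remainder. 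The only difference is bookkeeping---the paper defines one correction function $\widetilde{J}_{kj}$ per interfering pair directly from the Taylor coefficient, whereas you extract the pair-separable structure from a single correction PDE---which is cosmetic.
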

\begin{proof}
Please refer to Appendix E.		
\end{proof}

{The priority function $V(\mathbf{Q})$ is decomposed into the following three terms: 1) the base term $\sum_k J_k(Q_k)$ obtained by solving a base system without coupling, 2) the perturbation term accounting for the first order {\em interference coupling} due to simultaneously transmitting D2D nodes after MAC filtering, and 3) the residual error term. As a result, we adopt the following closed-form approximation of $V(\mathbf{Q})$:}
\begin{equation}	\label{finalapprox1}
\widetilde{V}\left( \mathbf{Q} \right) \triangleq  \sum_{k=1}^K J_k \left( Q_k \right)+\sum_{k=1}^K \sum_{j \neq k \atop j \notin \mathcal{N}_k(\delta)}   \frac{D_{kj}L_{kj}Q_k^2 Q_j}{(\log_2 Q_k)^2  \log_2 Q_j}
\end{equation}

\begin{remark}  [Approximation Error w.r.t. System Parameters]\
\vspace{-0.5cm}
    \begin{itemize}
        \item \textbf{Approximation Error w.r.t. Traffic Loading:} the approximation error is a decreasing function of the average arrival rate $\lambda_k$.
        \item \textbf{Approximation Error w.r.t. SNR:} the approximation error is an increasing function of the SNR (which is a decreasing function of $\gamma_k$).
        \item  \textbf{Approximation Error w.r.t. Sensing Distance:} the approximation error is a decreasing function of the carrier sensing distance at the order\footnote{For any $k$, $j \neq k$ and $j \notin \mathcal{N}_k(\delta)$, we have $d_{kj}>\delta$. Therefore, according to the long term path gain model in Assumption \ref{long}, we have $L_{kj}=\frac{G_k^r G_j^t \lambda^2}{\left( 4 \pi d_{kj}\right)^2} = \mathcal{O}\left( \frac{1}{\delta^2}\right)$.} at least  $\mathcal{O}\left(\frac{1}{\delta^2}\right)$.
    \end{itemize}
\end{remark}

%
%
From Corollary~\ref{cor2} and (\ref{finalapprox1}), the {\em priority function} $\widetilde{V}(\mathbf{Q}) = \mathcal{O}(\frac{Q_k^2}{\log Q_k})$ for large $Q_k,\forall k$. As a result, the longer queue will get higher priority in the order of $\frac{Q_k}{\log Q_k}$. 
Based on Theorem \ref{theorem3} and Theorem \ref{theorem4}, the approximation error between the optimal priority function $V^\ast \left(\mathbf{Q} \right)$ in Theorem \ref{theorem1} and the closed-form approximate priority function  $\widetilde {V}\left(\mathbf{Q} \right)$ in (\ref{finalapprox1})  is $\mathcal{O}(\frac{1}{\delta^2})+ o(1)$. In other words, the error terms are asymptotically small w.r.t. the carrier sensing distance $\delta$ and the slot duration.

\subsection{Asymptotically Delay-Optimal  Power Control Algorithm}
In this section, we use the closed-form approximate priority function in (\ref{finalapprox1}) to capture the urgency information of the $K$ D2D pairs and  obtain low complexity delay-aware power control. Using the approximate priority function in (\ref{finalapprox1}) and Lemma \ref{cor1}, the per-stage control problem (for each state realization $\boldsymbol{\chi}$) is given by\footnote{Note that $J_k'\left(Q_k \right) = \left(\frac{\mathrm{d} J_k\left(y\right)}{\mathrm{d} y } \Big/\frac{\mathrm{d} Q_k\left(y\right)}{\mathrm{d} y }\right)\Big|_{y=y\left(Q_k \right)} = y\left(Q_k \right)$, where $ y\left(Q_k \right)$ satisfies $Q_k\left(  y\left(Q_k \right) \right)=Q_k$.}
{
\begin{align}	\label{utility}
\max _{\mathbf{P} }  \ \sum_{k=1}^K  \Big(\underbrace{\frac{\partial \widetilde{V}\left(\mathbf{Q}\right)}{\partial Q_k}}_{\text{flow weight}}  \sigma_k \underbrace{ C_k\left(\mathbf{H}, \mathbf{P}\right)}_{\text{data rate}} - \gamma_k  P_k \Big)
\end{align}
}
where $\frac{\partial \widetilde {V}\left(\mathbf{Q}\right)}{\partial Q_k}$ can be calculated from (\ref{finalapprox1}) which is given by
\begin{align}
&\frac{\partial \widetilde {V}\left(\mathbf{Q}\right)}{\partial Q_k} = J_k'\left(Q_k\right)\\
&+ \sum_{j \neq k\atop j \notin \mathcal{N}_k(\delta)} \frac{Q_j (\ln Q_k -1)}{(\ln 2)  \log_2^2 Q_k \log_2 Q_j} \left( \frac{2 D_{kj}L_{kj} Q_k}{\log_2 Q_k} +  \frac{D_{jk}L_{jk} Q_j}{\log_2 Q_j}\right)\nonumber
\end{align}

{
The per-stage problem in (\ref{utility}) is similar to the weighted sum-rate (WSR) optimization subject to the power constraint, which has been widely studied in \cite{weiyu} and \cite{WSR}. However, unlike conventional WSR problems where the weights are static, the weights here in (\ref{utility}) are dynamic and are determined by the QSI via the priority function $\frac{\partial \widetilde{V}(\mathbf{Q})}{\partial Q_k}$. As such, the role of the QSI is to dynamically adjust the weight (priority) of the individual flows, whereas the role of the CSI is to adjust the priority of the flow based on the transmission opportunity in the rate function $C_k(\mathbf{H},\mathbf{P})$.} Note that the per-stage problem in (\ref{utility}) is challenging due to the non-convexity of $C_k\left(\mathbf{H}, \mathbf{P}\right)$ w.r.t. $\mathbf{P}$. We shall first derive a low complexity iterative solution that converges to the stationary point of (\ref{utility}). We then show that the converged solution is asymptotically optimal for sufficiently small $L^\delta$.
\begin{algorithm}[Delay-Aware Dynamic Power Control]\label{algorithm}\
\begin{itemize}	
\item \textbf{Step 1 [Initialization]:} Let $n=0$. Initialize a feasible $\mathbf{P}(0)$.

\item \textbf{Step 2 [Iteration]:} In the $(n+1)$-th iteration, the transmit power of each D2D transmitter is updated  based on the power results of the $n$-th iteration according to
    \begin{align}
    P_k(n+1)= \left(\frac{\partial \widetilde {V}\left(\mathbf{Q}\right)}{ \partial Q_k}\frac{1}{(\ln 2) (\gamma_k + \zeta_k(n))}-\frac{{\Gamma} I_k(n)}{{H}_{kk}}\right)^+
    \end{align}
    where $I_k(n)=N_0+\sum_{j \neq k, j \notin \mathcal{N}_k(\delta)} {H}_{kj} P_j(n)$ and $\zeta_k(n)=\sum_{j \neq k, j \notin \mathcal{N}_k(\delta)} \frac{1}{\ln 2}\frac{\partial \widetilde {V}\left({Q}\right)}{ \partial Q_j} \frac{{H}_{jj}P_j(n){H}_{kj}}{ I_j(n) (I_j(n)+{H}_{jj}P_j(n)) }$.

\item \textbf{Step 3 [Termination]:} Set $n = n + 1$ and go to Step 2 until a certain termination condition is satisfied.~\hfill\IEEEQED
\end{itemize}	
\end{algorithm}

%
%

Although the problem in (\ref{utility}) is non-convex in general, we show below that Algorithm \ref{algorithm} converges to the global optimal solution asymptotically for sufficiently large $\delta$.

\begin{corollary}  [Asymptotic Optimality of Algorithm \ref{algorithm}]  	\label{collaryAlg}
Algorithm \ref{algorithm} converges to the unique global optimal point of  the problem in (\ref{utility}) for sufficiently large $\delta$.~\hfill\IEEEQED
\end{corollary}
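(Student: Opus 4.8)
The plan is to reduce the claim to two facts that become available only in the weak-interference (large-$\delta$) regime: (i) the per-stage problem in (\ref{utility}) possesses a \emph{unique} global maximizer, and (ii) the fixed-point iteration in Algorithm \ref{algorithm} converges to it. The unifying observation is that, by Assumption \ref{a_MAC} together with the long-term path-gain model, every cross-channel gain coupling two simultaneously active flows satisfies $L_{kj}=\mathcal{O}(1/\delta^2)$ (for $j\neq k$ with $j\notin\mathcal{N}_k(\delta)$), so the interference coupling in $C_k(\mathbf{H},\mathbf{P})$ vanishes as $\delta\to\infty$ and the problem approaches $K$ decoupled single-variable problems.

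First I would identify the fixed points of Algorithm \ref{algorithm} with the stationary (KKT) points of (\ref{utility}). Writing $w_k\triangleq\partial\widetilde{V}(\mathbf{Q})/\partial Q_k$ and differentiating the objective of (\ref{utility}) with respect to $P_k$, the direct term contributes $w_k\sigma_k\,\partial C_k/\partial P_k$, while each interfered flow $j\neq k$ contributes a negative cross term $w_j\sigma_j\,\partial C_j/\partial P_k$; collecting the latter yields exactly the interference price $\zeta_k$, and setting the gradient to zero (with the projection $(\cdot)^+$ encoding the multiplier for the constraint $P_k\ge 0$) rearranges term by term into the update in Algorithm \ref{algorithm}. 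Hence $\mathbf{P}^\star$ is a fixed point of the iteration if and only if it is a KKT point of (\ref{utility}). Because the penalty $-\gamma_k P_k$ drives any maximizer into the compact box $P_k\in[0,w_k/(\gamma_k\ln 2)]$ (using $\zeta_k\ge 0$), a global maximizer exists and must be a KKT point, so it is among these fixed points.

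Next I would establish uniqueness and global optimality via an asymptotic concavity argument. As $\delta\to\infty$ the active cross gains $H_{kj}\to 0$, so the Hessian of the objective in (\ref{utility}) converges to a block-diagonal matrix whose $k$-th block is the second derivative of the single-flow term $w_k\sigma_k\log_2(1+\tfrac{1}{\Gamma}H_{kk}P_k/I_k)-\gamma_k P_k$, which is strictly concave in $P_k$. By continuity of the Hessian on the compact box, for $\delta$ large enough the full Hessian remains negative definite, so the objective is strictly concave and its global maximizer, and therefore its KKT point, is unique. Consequently the iteration has a single fixed point, which coincides with the global optimum of (\ref{utility}).

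Finally I would prove convergence by a contraction argument. The key structural feature is that $P_k(n+1)$ depends on the remaining powers (and on $P_k(n)$ itself) only through $I_k(n)$ and $\zeta_k(n)$, both of which are assembled entirely from cross gains $H_{kj},H_{jk}$ with $j\notin\mathcal{N}_k(\delta)$; every such factor is $\mathcal{O}(L^\delta)=\mathcal{O}(1/\delta^2)$. Thus each entry of the Jacobian of the update map is $\mathcal{O}(L^\delta)$, so its spectral radius is $\mathcal{O}(L^\delta)<1$ for sufficiently large $\delta$; since $(\cdot)^+$ is non-expansive and the map sends the box into itself, the composed map is a contraction, and the Banach fixed-point theorem yields convergence to the unique fixed point, which by the previous step is the global optimum. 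I expect the main obstacle to be making the contraction estimate \emph{uniform} over the whole feasible box while handling the non-smoothness introduced by the projection $(\cdot)^+$, and rigorously linking ``unique fixed point'' to ``global optimum'' despite the non-convexity of (\ref{utility}); it is precisely the $L^\delta\to 0$ regime that simultaneously delivers the negative-definite Hessian and the sub-unit contraction constant, which is why the statement is only asymptotic in $\delta$.
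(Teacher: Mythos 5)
Your proposal is correct in substance, and it shares the paper's central insight while being more complete on one point where the paper is terse. The paper's Appendix F first rewrites (\ref{utility}) over the set of \emph{active} transmitters $\mathcal{A}(\delta)=\{k:\sigma_k=1\}$ and then proves (Lemma 8 there) that the objective is convex (in minimization form) for sufficiently small $L^\delta$, using a line-segment argument: for $g(t)=f(t\mathbf{P}^{(1)}+(1-t)\mathbf{P}^{(2)})$ it shows $g''(t)\ge 0$ because the harmful term $R_k^{-2}(\mathrm{d}R_k/\mathrm{d}t)^2$ is $\mathcal{O}((L^\delta)^2)$ and is dominated by the benign squared term; your Hessian-perturbation-on-a-compact-box argument establishes the same asymptotic concavity by a different (morally equivalent) technique. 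Where you genuinely go beyond the paper is convergence: the paper simply asserts that the limiting point $\mathbf{P}(\infty)$ of Algorithm \ref{algorithm} is a stationary point of (\ref{utility}) and invokes convexity to conclude it is the unique global optimum, whereas you prove convergence by identifying fixed points with KKT points and then running a Banach contraction argument, exploiting that every off-diagonal dependence in the update enters through cross gains of size $\mathcal{O}(L^\delta)$; this buys a self-contained convergence guarantee (and even a geometric rate) that the paper leaves implicit. Two small points to tighten: first, like the paper you should restrict to the active set before claiming strict concavity, since for $\sigma_k=0$ the objective is linear in $P_k$ and the corresponding Hessian block is zero, so strict negative definiteness holds only over $\{P_k:k\in\mathcal{A}(\delta)\}$ (inactive flows trivially take $P_k=0$); second, the threshold on $\delta$ in both your argument and the paper's depends on the realization of the short-term fading $\widetilde{H}_{kj}$ (since $H_{kj}=L_{kj}\widetilde{H}_{kj}$ with unbounded support), a caveat the paper also glosses over, so you are no worse off there.
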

\begin{proof}
Please refer to Appendix F.
\end{proof}

\subsection{Summary of the Overall Solution and Implementation Considerations}
We give a summary of the overall dynamic power control solution and discuss some implementation considerations (computational complexity) in the context of LTE-Advanced systems \cite{3gpp}. Specifically, we consider the scenario of {\em fully controlled D2D communications}~\cite{D2D7} in LTE-Advanced in which the eNodeB takes control of the radio resource for the D2D nodes inside its coverage. A frame is divided into a {\em contention phase}, a {\em reporting phase}, a {\em decision phase} and a {\em data transmission phase}, which are described as follows:
\begin{enumerate}
\item \textbf{Contention Phase:} D2D nodes access the channels distributively according to a CSMA-like MAC protocol. At the end of the contention phase, each D2D transmitter gets its corresponding MAC output $\sigma_k(t)$. Also, during this phase, the CSI $\mathbf{H}(t)$ could be estimated by the D2D receivers\footnote{Each active D2D transmitter has to send the control signaling for MAC contention. The CSI can be estimated if the signaling is sent with a given power, i.e., as the reference signal.}.

\item \textbf{Reporting Phase:} Each of the active transmitters ($\mathcal{A}(t)=\{ k: \sigma_k(t)=1  \}$) report their local CSI $\{H_{kj}(t): \forall j\}$ and local QSI $Q_k(t)$ to the eNodeB via Physical Uplink Control Channel (PUCCH) and Physical Uplink Shared Channel (PUSCH) \cite{3gpp4}, respectively.

\item \textbf{Decision Phase:} After receiving the CSI and QSI reports, the eNodeB calculates the optimal power for the active D2D nodes according to the proposed Algorithm \ref{algorithm}, and broadcasts the power control actions  to the  active D2D nodes via Physical Downlink Control Channel (PDCCH) \cite{3gpp4}.

\item \textbf{Data Transfer Phase:} The active D2D transmitters adjust their transmit power according to the  power control broadcasted from the eNodeB and transmit data during the data transmission phase in the current frame.
\end{enumerate}


\begin{remark}	[Computational Complexity Consideration]
The computational complexity of the proposed solution is very low. Specifically, most of complexity comes from computing the priority function in (\ref{finalapprox1}) and computing the power control actions using Algorithm \ref{algorithm}. The complexity of computing the priority function is very low (due to the closed form characterization) compared with conventional brute-force value iterations algorithms \cite{DPcontrol}, which have exponential complexity in $K$. Computing the power control actions using Algorithm \ref{algorithm} is similar to those conventional iterative water-filling solutions for solving WSR optimization in \cite{weiyu}. We shall quantify the complexity comparison in Section V.~\hfill\IEEEQED 
\end{remark}

\begin{remark}	[Extension for OFDMA and General Fading]
The solution framework in Theorem 2 and Theorem 3 can be extended easily to multi-channel systems (such as OFDMA \cite{OFDMA2}) as well as general fading distributions. For OFDMA systems, the modification required is the rate equation in (\ref{capacity}). Each channel can be treated independently since orthogonal parallel channels do not introduce additional coupling. For general fading distributions, the modification required is the solution of the per-flow PDE in the base system $J_k \left(Q_k \right)$ in Lemma~\ref{theorem3}.
~\hfill\IEEEQED
\end{remark}

%
%

\section{Simulation Results}    \label{sim}
In this section, we evaluate the performance of the proposed low-complexity power control scheme for D2D communications. The following four baseline schemes are adopted for performance comparison.

\begin{itemize}
\item \textbf{Baseline 1 [Cellular Mode]}: The Tx-Rx pairs transmit their data via the cellular BS in a conventional way \cite{D2D4}. The $K$ pairs share the channel using TDMA in a Round-Robin way.

\item \textbf{Baseline 2 [D2D with Fixed Power]}: The transmitters always transmit with the maximum power for D2D communications \cite{D2D4}.

\item \textbf{Baseline 3 [D2D with CSI-based Power Control]}: {Large deviation \cite{comparison1} is an approach to bypass the complex delay minimization by converting the delay constraint into an equivalent rate constraint.} The CSI-based power control scheme determines the transmit power for maximizing the total data rate without considering the queueing information \cite{D2D8}.

\item \textbf{Baseline 4 [D2D with Queue-weighted Power Control]}: {Lyapunov drift approach \cite{comparison2} considers queue stabilization instead of delay minimization. The queue-weighted power control scheme exploits both CSI and QSI, and solves the per-stage problem (\ref{utility}) replacing $\frac{\partial \widetilde{V}(\mathbf{Q})}{\partial Q_k}$ with  $Q_k$.} It is similar to the Modified Largest Weighted Delay First algorithm in \cite{MLWDF} but with a modified objective function.
\end{itemize}

In the simulations, 10 D2D pairs are considered in a single cell with  radius 500m. The transmitters are located randomly in the cell and the receivers appear within the D2D communication range of their corresponding transmitters, which is set to 50m. The carrier sensing distance $\delta$ is 100m. Poisson data arrival is considered with a uniform distributed average arrival rate, which has  mean 5Mbps. The path gain is calculated as $L_{kj}=15.3+37.6\log_{10} d_{kj}$ \cite{3gpp5} with the fading coefficient distributed as $\mathcal{CN}(0,1)$. The average transmit power is 23dBm and the noise power spectrum density is -174dBm/Hz. The system bandwidth is 10MHz. The duration of the time slot is 1ms. The SINR gap $\Gamma$ is set to 1 in the simulation. The weights $\gamma_k$ are the same and $\beta_k=1$ for all $k$. For comparison, the delay performances of different schemes are evaluated with the same average transmit power by adjusting $\gamma_k$. For obtaining the average performance, we consider 100 random topologies, each of which has 1000 time slots.

Fig.~\ref{sim1} shows the average delay versus the average arrival rate. For large traffic load, the transmission via D2D communication has significant performance gain compared with the conventional cellular transmission. This is mainly because of the short distance between D2D transmitters and receivers and their efficient spatial reuse. It can also be observed that the proposed power control algorithm outperforms all the baselines, which verifies the accuracy of the priority function approximation in the proposed power control scheme. It is noticed that the delay of the proposed scheme with small arrival rate is not 0 but a small value, because the transmitters could not transmit data in all time slots.

Fig.~\ref{sim2} shows the average delay versus the average transmit power. The proposed power control scheme also achieves better performance than other baseline schemes. A larger transmit power could increase the received power of the desired signal, but, meanwhile, would cause more serious interference to other D2D pairs. Because of the two-fold effect of the transmit power, the change of the average delay performance is relatively small with adjustment of the average transmit power.

\begin{figure}
  \centering
  \includegraphics[width=2.5in]{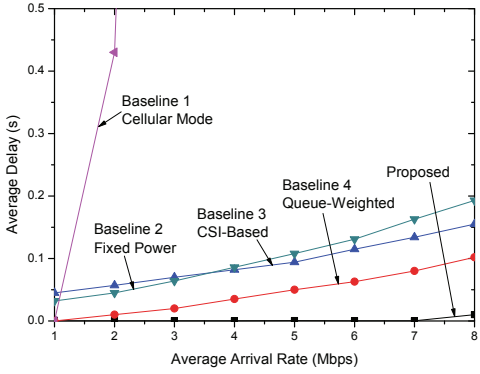}\\
  \caption{Performance comparison with different average arrival rates}
  \label{sim1}
\end{figure}

\begin{figure}
  \centering
  \includegraphics[width=2.5in]{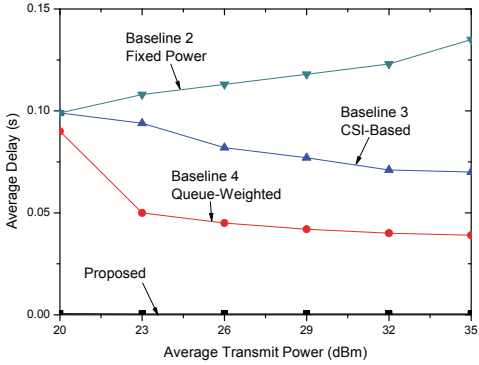}\\
  \caption{Performance comparison with different average transmit power}
  \label{sim2}	
\end{figure}

\begin{figure}
  \centering
  \includegraphics[width=2.5in]{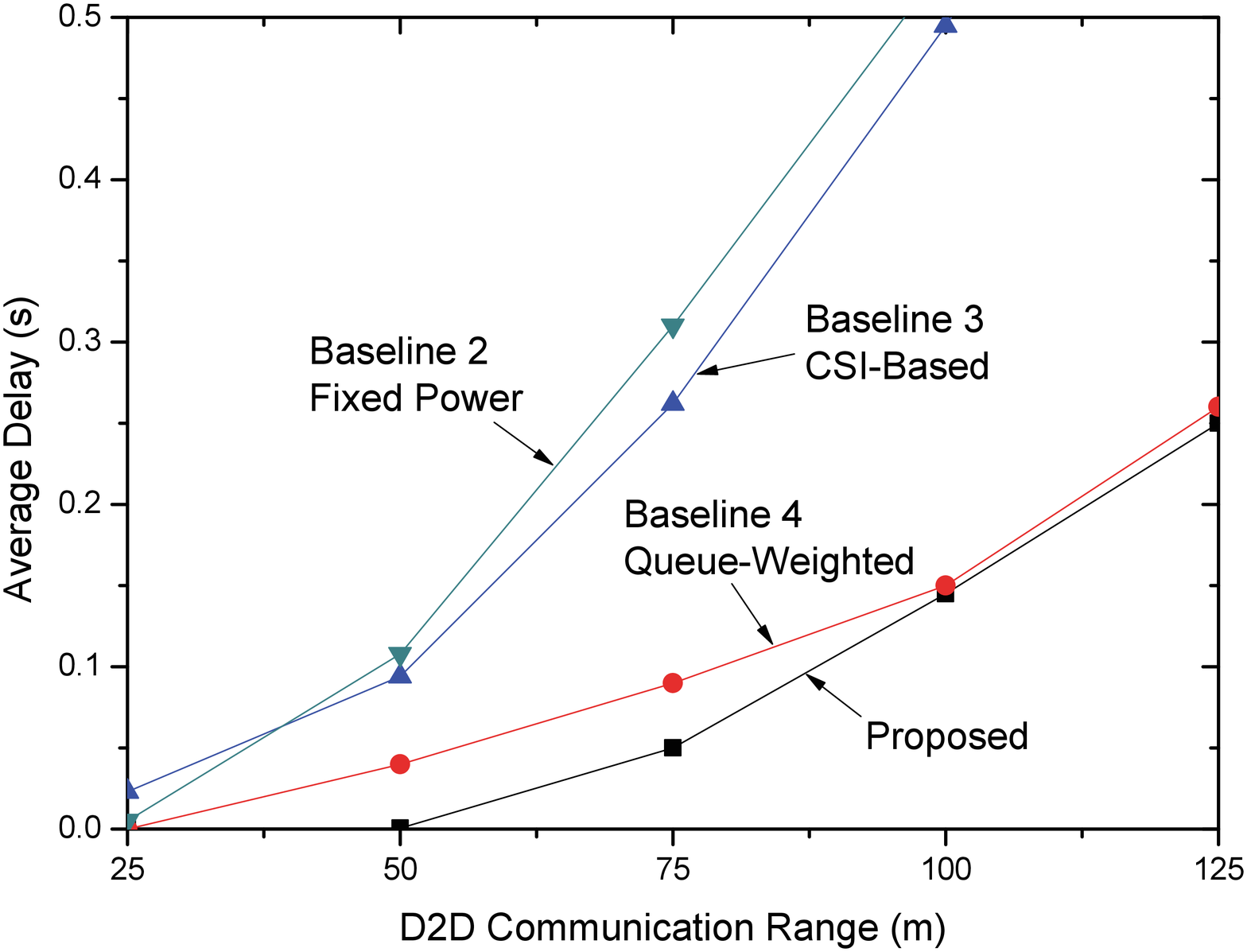}\\
  \caption{Performance comparison with different D2D communication ranges}
  \label{sim3}
\end{figure}

\begin{figure}
  \centering
  \includegraphics[width=2.5in]{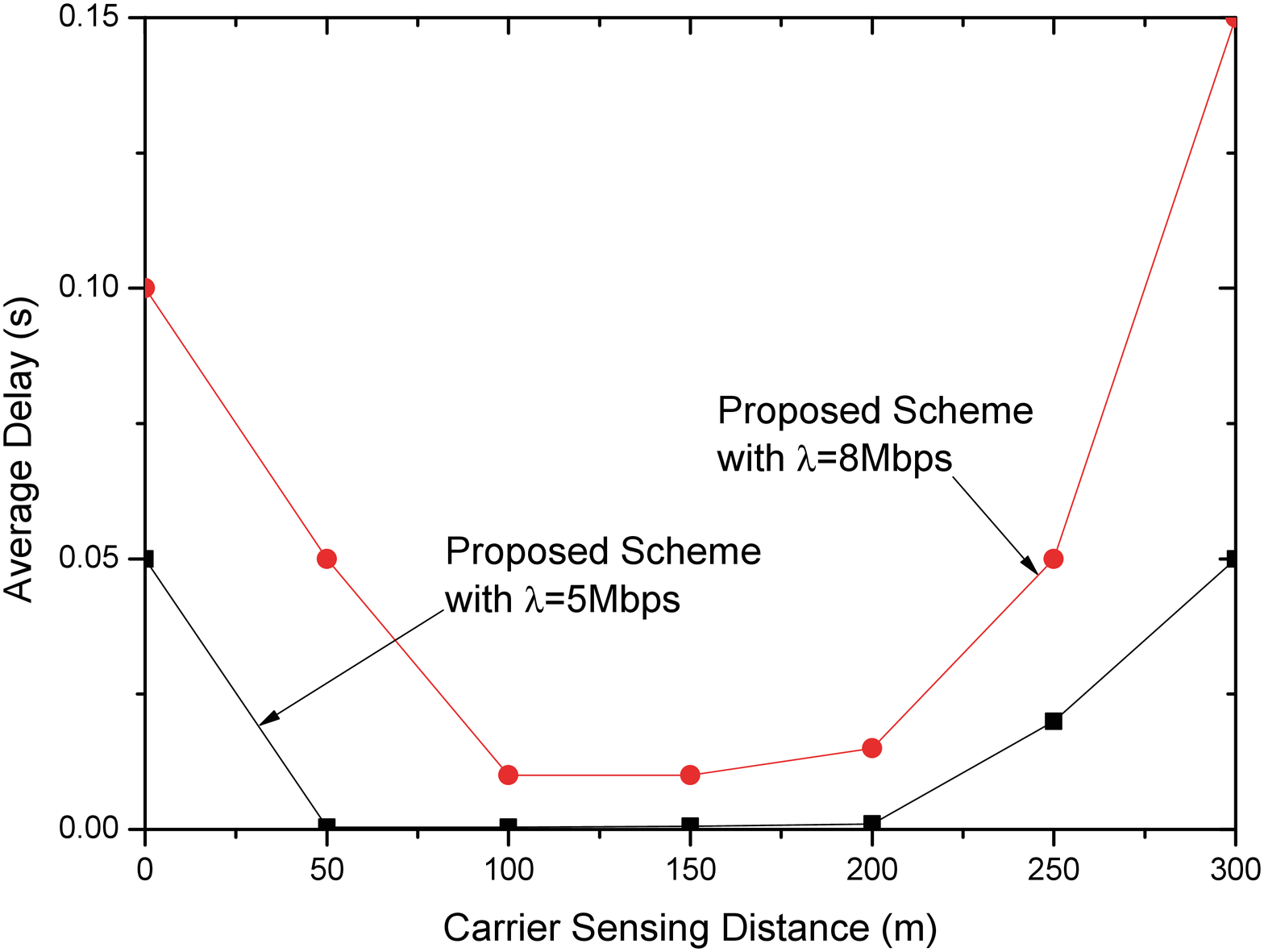}\\
  \caption{Effect of carrier sensing distance}
  \label{sim4}
\end{figure}

Fig.~\ref{sim3} indicates the average delay versus the D2D communication range. Unlike the average transmit power, the D2D communication range affects the received power of the desired signal without increasing the interference directly, so the average delay changes a lot with different D2D communication ranges. It can be found that the proposed power control scheme outperforms the baselines when the D2D communication range is small. For large D2D communication ranges (i.e., 100m and 125m), all schemes achieve quite poor delay performance. Note that since the carrier sensing distance $\delta$ is set to 100m here, MAC could not filter the large interference well. Thus, the performance of the proposed power scheme degrades because the weak coupling property of the queue dynamics does not hold when the D2D communication range is too large compared to the carrier sensing distance.

Fig.~\ref{sim4} shows the effect of carrier sensing distance $\delta$ of the proposed power control scheme. As discussed before, a very small sensing distance cannot filter the large interference or guarantee the weak coupling property of the queue dynamics. However, a very large sensing distance leads to inefficient spatial reuse. An appropriate carrier sensing distance should be selected to balance the tradeoff between the above two aspects. From Fig.~\ref{sim4}, we observe that the proposed scheme could achieve good delay performance with a large regime of carrier sensing distance.

Table \ref{tabletime}  illustrates the comparison of the MATLAB computational time of the proposed solution, the  baselines and the brute-force value iteration algorithm \cite{DPcontrol} in one time slot. Note that the computation time of Baseline 2 is the smallest in all different $K$ scenarios but it has the worst performance.  In addition,  the computational time of our proposed scheme is close to those of  Baselines 3 \& 4 and the difference  is due to the computation of the approximate priority function. Therefore, our proposed scheme achieves significant performance gain compared to all the baselines, with small computational complexity cost.

\begin{table}[!h]
\caption{Comparison of the MATLAB computational time}\label{tabletime}
\vspace{-0.3cm}
\begin{center}
\begin{tabular}{|c|c|c|c|}
\hline
& $\tcb{K=5}$  & $\tcb{K=10}$ & $\tcb{K=15}$    \\
\hline
{Baseline 2} & \tcb{$<1$ms} & \tcb{$<1$ms} & \tcb{$<1$ms}    			\\
{Baseline 3 \& 4} & \tcb{ 0.007s} & \tcb {0.015s} & \tcb{ 0.029s}\\
{Proposed Scheme} & \tcb{0.046s} & \tcb{ 0.091s}  &  \tcb{0.143s}	\\	
{Brute-Force Value Iteration} &   \tcb{$>10^5$s} & \tcb{$>10^5$s}  &  \tcb{$>10^5$s} 	\\
\hline
\end{tabular}
\end{center}
\vspace{-0.3cm}	
\end{table}

\section{Conclusion}
In this paper, we consider the dynamic power control for delay-aware D2D communications by formulating the associated stochastic optimization problem as an infinite horizon average cost MDP. To deal with the curse of dimensionality, a closed-form approximate priority function is derived using perturbation analysis. Both the analysis and the numerical results show that the approximation error is small and will vanish if the cross-channel path gain goes to 0. Based on the closed-form approximation, we propose a low complexity iterative power control algorithm and discuss some implementation issues for practical systems.  Finally, simulation results show that the proposed power control algorithm has significant performance gains in delay performance compared with various state-of-the-art baselines.


\section*{Appendix A: Proof of Theorem \ref{theorem1}}
Following \emph{Prop. 4.6.1} of \cite{DPcontrol},  the sufficient conditions for the optimality of \emph{Problem 1} are that assume  ($\theta^\ast, \{ V^\ast\left(\mathbf{Q} \right) \}$)  solves the following Bellman equation:
\begin{align}\label{belequapp1}
	&\theta^\ast {\tau} + V^\ast\left(\boldsymbol{\chi} \right)\nonumber\\
=& \min_{ \boldsymbol{\Omega} (\boldsymbol{\chi})} \Big[ c\big(\mathbf{Q}, \boldsymbol{\Omega}\big(\boldsymbol{\chi}\big)\big){\tau}+  \sum_{\boldsymbol{\chi}'} \Pr\big[ \boldsymbol{\chi}' \big| \boldsymbol{\chi}, \boldsymbol{\Omega}\big( \boldsymbol{\chi}\big)\big]  V^\ast\left(\boldsymbol{\chi}'\right)    \Big]	\nonumber \\
=& \min_{ \boldsymbol{\Omega} (\boldsymbol{\chi})}\Big[ c\big(\mathbf{Q}, \boldsymbol{\Omega}\big(\boldsymbol{\chi}\big)\big) {\tau}+ \sum_{\mathbf{Q}'} \sum_{\mathbf{H}'}\sum_{\boldsymbol{\sigma}'}  \Pr \big[ \mathbf{Q}'\big| \boldsymbol{\chi}, \boldsymbol{\Omega}\big(\boldsymbol{\chi}  \big) \big]\nonumber\\
&\hspace{3.5cm} \Pr \big[\mathbf{H}' \big]\Pr \big[\boldsymbol{\sigma}' \big]  V^\ast\left(\boldsymbol{\chi}' \right)    \Big]
\end{align}
and  $V^\ast$ satisfies the  condition in (\ref{trans1})  for all  admissible  policies $\boldsymbol{\Omega}$. Then ${\theta^*}=\underset{\boldsymbol{\Omega}}{\min} L(\boldsymbol{\Omega})$. Taking expectation w.r.t. $\mathbf{H}$ and $\boldsymbol{\sigma}$ on both sizes of (\ref{belequapp1}) and denoting $V^\ast\left(\mathbf{Q} \right) = \mathbb{E}\big[V^\ast\left(\boldsymbol{\chi}\right) \big| \mathbf{Q}\big]$, we obtain the equivalent Bellman equation in (\ref{trans1}) in Theorem \ref{theorem1}.

\section*{{Appendix B: Proof of Theorem \ref{HJB1}}}
In the proof, we shall first establish the relationship between the equivalent Bellman equation in (\ref{bellman1}) in Theorem \ref{HJB1} and the approximate Bellman equation in (\ref{bellman2}) in the following Lemma \ref{cor1}. Then, we establish the relationship between the approximate Bellman equation in (\ref{bellman2}) in the  Lemma \ref{cor1} and the PDE in (\ref{bellman3}) in  Theorem \ref{HJB1}.

\emph{1. Relationship between the Equivalent Bellman and the Approximate Bellman Equation:} We establish the following lemma on the approximate Bellman equation to  simplify the equivalent Bellman equation in (\ref{bellman1}):
\begin{lemma}	[Approximate Bellman Equation]	\label{cor1}
For any given weights $\boldsymbol{\beta}$ and $\boldsymbol{\gamma}$, if
\begin{itemize}
\item	there is a unique  ($\theta^\ast, \{ V^\ast\left(\mathbf{Q} \right) \}$) that satisfies the Bellman equation and transversality condition in Theorem~\ref{theorem1}.

\item	there exist $\theta$ and $V\left( \mathbf{Q}\right)$ of class\footnote{$f(\mathbf{x})$ ($\mathbf{x}$ is a $K$-dimensional vector) is of class $\mathcal{C}^2(\mathbb{R}_+^K)$, if the first and second order partial derivatives of $f(\mathbf{x})$ w.r.t. each element of $\mathbf{x}$ are continuous when $\mathbf{x}\in \mathbb{R}_+^K$.} $\mathcal{C}^2(\mathbb{R}_+^K)$ that solve the following \emph{approximate Bellman equation}:
\begin{align}\label{bellman2}
&\theta =\mathbb{E}\bigg[ \min_{ \boldsymbol{\Omega}\left(\boldsymbol{\chi} \right)} \Big[ c\big(\mathbf{Q}, \boldsymbol{\Omega}\big(\boldsymbol{\chi}\big)\big)\hspace{1cm}\forall \mathbf{Q} \in \boldsymbol{\mathcal{Q}}\\
&\hspace{1cm}+\sum_{k=1}^K \frac{\partial V \left(\mathbf{Q} \right) }{\partial Q_k} \Big[  \lambda_k - \sigma_kC_k\big(\mathbf{H},\Omega(\boldsymbol{\chi}) \big)  \Big] \Big]\bigg| \mathbf{Q}\bigg]\nonumber
\end{align}
and for all admissible  control policy $\Omega$, the transversality condition  in (\ref{trans1}) is satisfied for $V$,
\end{itemize}
then, we have
\begin{align}
		\theta^\ast=\theta+o(1), \quad V^\ast\left(\mathbf{Q} \right)=V\left(\mathbf{Q} \right)+o(1), \quad \forall \mathbf{Q} \in \boldsymbol{\mathcal{Q}}
\end{align}
where the error term $o(1)$ asymptotically goes to zero  for sufficiently small slot duration $\tau$.~\hfill\IEEEQED
\end{lemma}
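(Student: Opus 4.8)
The plan is to show that any $\mathcal{C}^2(\mathbb{R}_+^K)$ pair $(\theta, V)$ solving the approximate Bellman equation (\ref{bellman2}) is, when substituted into the exact equivalent Bellman equation (\ref{bellman1}), a solution up to an additive residual of order $o(\tau)$; the assumed uniqueness of $(\theta^\ast,\{V^\ast(\mathbf{Q})\})$ together with the transversality condition (\ref{trans1}) will then force $\theta^\ast=\theta+o(1)$ and $V^\ast(\mathbf{Q})=V(\mathbf{Q})+o(1)$. First I would rewrite the one-step term $\sum_{\mathbf{Q}'}\Pr[\mathbf{Q}'|\boldsymbol{\chi},\boldsymbol{\Omega}(\boldsymbol{\chi})]V(\mathbf{Q}')$ appearing in (\ref{bellman1}) as the conditional expectation $\mathbb{E}[V(\mathbf{Q}(t+1))\mid \mathbf{Q}(t)=\mathbf{Q},\boldsymbol{\chi}]$, and observe from the queue dynamics (\ref{queueing}) that, away from the boundary, the per-slot increment $\Delta Q_k \triangleq Q_k(t+1)-Q_k(t) = \big(A_k-\sigma_k C_k(\mathbf{H},\mathbf{P})\big)\tau$ is of order $\tau$. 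This scaling is the structural fact that makes the continuous approximation possible.

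The central computation is a second-order Taylor expansion of $V$, which is legitimate since $V\in\mathcal{C}^2(\mathbb{R}_+^K)$:
\begin{align}
V(\mathbf{Q}(t+1)) = V(\mathbf{Q}) + \sum_{k=1}^K \frac{\partial V(\mathbf{Q})}{\partial Q_k}\Delta Q_k + \frac12\sum_{k,l}\frac{\partial^2 V(\mathbf{Q})}{\partial Q_k \partial Q_l}\Delta Q_k\Delta Q_l + R .
\end{align}
Taking the conditional expectation and using $\mathbb{E}[A_k]=\lambda_k$, the linear term contributes $\tau\sum_{k}\frac{\partial V}{\partial Q_k}(\lambda_k-\sigma_k C_k)$, while the quadratic term and the remainder $R$ are $\mathcal{O}(\tau^2)$. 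Substituting back into (\ref{bellman1}), the constant $V(\mathbf{Q})$ cancels on both sides (it depends on neither $\boldsymbol{\Omega}(\boldsymbol{\chi})$ nor the outer averaging variables $\mathbf{H},\boldsymbol{\sigma}$), so that the right-hand side equals $V(\mathbf{Q}) + \tau\,\mathbb{E}\big[\min_{\boldsymbol{\Omega}}[c+\sum_k\frac{\partial V}{\partial Q_k}(\lambda_k-\sigma_k C_k)]\mid\mathbf{Q}\big] + \mathcal{O}(\tau^2)$. Reading off the coefficient of $\tau$ identifies the bracketed expectation with $\theta$ by (\ref{bellman2}), and therefore $(\theta,V)$ satisfies (\ref{bellman1}) with the exact optimal cost replaced by $\theta$ and an error of $\mathcal{O}(\tau^2)=o(\tau)$.

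To upgrade ``approximate solution of (\ref{bellman1})'' into the claimed $o(1)$ closeness, I would compare $(\theta,V)$ with the unique exact solution $(\theta^\ast,V^\ast)$. Since $(\theta,V)$ satisfies (\ref{bellman1}) up to an $o(\tau)$ per-stage residual and obeys the transversality condition (\ref{trans1}), the same verification argument used in Theorem \ref{theorem1} (applied to $(\theta,V)$) shows that $\theta$ differs from the optimal average cost $\theta^\ast$ only by the accumulated residual, i.e. $\theta=\theta^\ast+o(1)$, and that the associated relative value function matches $V^\ast$ up to $o(1)$ on the single recurrent class guaranteed by the unichain admissibility in Definition \ref{admissibledis}. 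Uniqueness is what makes the comparison well posed and rules out spurious alternative limits.

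The main obstacle will be making the remainder control rigorous and uniform in $\mathbf{Q}$. Two points require care: (i) the boundary layer near $Q_k=0$, where the clipping $\max\{\cdot,0\}$ in (\ref{queueing}) breaks the identity $\Delta Q_k=(A_k-\sigma_k C_k)\tau$, so one must verify that this region contributes only $o(\tau)$ in expectation; and (ii) bounding $\mathbb{E}[|\Delta Q_k|^2\mid\mathbf{Q}]$ and the Taylor remainder $R$, which demands moment estimates on both the arrivals and the queue lengths. These are supplied precisely by the admissibility hypotheses — the existence of the moment generating function of $A_k$ and the third-order stability $\lim_{t\to\infty}\mathbb{E}^{\boldsymbol{\Omega}}[\sum_k Q_k^3(t)]<\infty$ — which ensure that the quadratic and boundary contributions are genuinely negligible relative to the $\mathcal{O}(\tau)$ first-order term, so that the residual is $o(\tau)$ uniformly and the limits $\theta^\ast=\theta+o(1)$, $V^\ast=V+o(1)$ follow as $\tau\to 0$.
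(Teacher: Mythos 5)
Your first half is essentially the paper's own argument: the paper likewise Taylor-expands $V(\mathbf{Q}(t+1))$ about $\mathbf{Q}$ using the $\mathcal{C}^2$ assumption and the $\mathcal{O}(\tau)$ per-slot increment, cancels $V(\mathbf{Q})$ from both sides of (\ref{bellman1}), and identifies the $\mathcal{O}(\tau)$ terms with the approximate equation (\ref{bellman2}). The one technical point you elide is the interchange of $\min_{\boldsymbol{\Omega}}$ with the higher-order error when you ``read off the coefficient of $\tau$''; the paper handles this with a two-sided bound on minima (its Lemma~\ref{lemma2}), but that is a minor omission.

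The genuine gap is in your final step. The verification argument you invoke --- iterate the Bellman equation with its $o(\tau)$ residual over $T$ stages, apply transversality, divide by $T\tau$ --- delivers only $\theta=\theta^\ast+o(1)$: in that computation the initial difference $V(\mathbf{Q}(0))-V^\ast(\mathbf{Q}(0))$ and the terminal expectations are divided by $T$ and vanish as $T\to\infty$, so no pointwise information about $V-V^\ast$ survives. Thus ``the associated relative value function matches $V^\ast$ up to $o(1)$'' is precisely the claim still requiring proof, and it does not follow from the verification theorem. The paper closes this with a separate limit-and-contradiction argument (its Lemma~\ref{lemma3}): if $V(\mathbf{Q}')-V^\ast(\mathbf{Q}')$ failed to vanish as $\tau\to 0$ at some $\mathbf{Q}'$, then --- since the residual of $(\theta,V)$ in the exact Bellman equation is $o(1)$ --- letting $\tau\to 0$ would make $(\theta,V)$ an exact solution satisfying the transversality condition (\ref{trans1}) while still differing from $(\theta^\ast,V^\ast)$, contradicting the assumed uniqueness. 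Your closing remark that uniqueness ``rules out spurious alternative limits'' is exactly this idea, but it must be executed as a contradiction argument on the equation itself rather than attributed to verification; as written, the $V^\ast\left(\mathbf{Q}\right)=V\left(\mathbf{Q}\right)+o(1)$ half of your conclusion is unsupported.
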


\begin{proof}[Proof of Lemma \ref{cor1}]
Let $\mathbf{Q}' =(Q_1',\cdots, Q_k')= \mathbf{Q}(t+1)$ and $\mathbf{Q}=(Q_1,\cdots, Q_k)=\mathbf{Q}(t)$. For the queue dynamics in (\ref{queueing}) and sufficiently small $\tau$, we have $Q_k'  = Q_k- \sigma_kC_k\left(\mathbf{H},\mathbf{P}\right)  + A_k\tau$, ($\forall k$). Therefore, if  $V\left(\mathbf{Q}\right)$ is of class $\mathcal{C}^2(\mathbb{R}_+^K)$, we have the following Taylor expansion on $V\left( \mathbf{Q}'\right)$:
\begin{align}	
&\mathbb{E}\left[ V\left( \mathbf{Q}'\right) \big| \mathbf{Q} \right]\\
=&V\left( \mathbf{Q}\right)+\sum_{k=1}^K  \frac{\partial V\left(\mathbf{Q}\right)}{\partial Q_k} \Big[  \lambda_k - \mathbb{E}\big[\sigma_kC_k\big(\mathbf{H},\boldsymbol{\Omega}(\boldsymbol{\chi}) \big) \Big|  \mathbf{Q} \Big]   \tau + o(\tau)\nonumber
\end{align}

For notation convenience, let $F_{\boldsymbol{\chi}}(\theta, V, \boldsymbol{\Omega}(\boldsymbol{\chi}))$ denote the \emph{Bellman operator}:
\begin{align}
	  F_{\boldsymbol{\chi}}(\theta, V, \boldsymbol{\Omega}(\boldsymbol{\chi})) =& \sum_{k=1}^K \frac{\partial V \left(\mathbf{Q} \right) }{\partial Q_k} \Big[  \lambda_k - \sigma_kC_k\big(\mathbf{H},\boldsymbol{\Omega}(\boldsymbol{\chi}) \big) \Big] 	\notag \\
	  &-\theta +  c\left(\mathbf{Q}, \boldsymbol{\Omega}\left(\boldsymbol{\chi} \right)\right)+\nu  G_{\boldsymbol{\chi}}(V,\boldsymbol{\Omega}(\boldsymbol{\chi}))
\end{align}
for some smooth function $G_{\boldsymbol{\chi}}$ and $\nu=o(1)$ (w.r.t. $\tau$). Denote
$F_{\boldsymbol{\chi}}(\theta, V)=\min_{ \boldsymbol{\Omega}\left( \mathbf{Q} \right)} F_{\boldsymbol{\chi}}(\theta, V, \boldsymbol{\Omega}(\boldsymbol{\chi}))$.
Suppose $\left(\theta^\ast, V^\ast\right)$ satisfies the Bellman equation in (\ref{bellman1}), we have $\mathbb{E}\left[ F_{\boldsymbol{\chi}}\left( \theta^\ast, V^\ast\right) \big|\mathbf{Q}\right]= \mathbf{0}, \quad \forall \mathbf{Q} \in \boldsymbol{\mathcal{Q}}$.
Similarly, if $\left(\theta, V\right)$ satisfies the approximate Bellman equation in (\ref{bellman2}), we have
\begin{align}	\label{defapp}
	\mathbb{E}\left[F^\dagger_{\boldsymbol{\chi}}\left( \theta, V\right) \big|\mathbf{Q}\right]= \mathbf{0}, \quad \forall \mathbf{Q} \in \boldsymbol{\mathcal{Q}}
\end{align}
where  $F^\dagger_{\boldsymbol{\chi}}(\theta, V)=\min_{ \boldsymbol{\Omega}\left( \mathbf{Q} \right)} F^\dagger_{\boldsymbol{\chi}}(\theta, V, \boldsymbol{\Omega}(\boldsymbol{\chi}))$
and $F^\dagger_{\boldsymbol{\chi}}(\theta, V, \boldsymbol{\Omega}(\boldsymbol{\chi}))= F_{\boldsymbol{\chi}}(\theta, V, \boldsymbol{\Omega}(\boldsymbol{\chi}))-\nu  G_{\boldsymbol{\chi}}(V,\boldsymbol{\Omega}(\boldsymbol{\chi}))$. We then establish  the  following lemma.
\begin{lemma}\label{lemma2}
If $\left(\theta, V\right)$ satisfies the approximate Bellman equation in (\ref{bellman2}), then $\big|\mathbb{E}\big[F_{\boldsymbol{\chi}}(\theta, V)\big|\mathbf{Q}\big]\big| = o(1)$ for any $\mathbf{Q} \in \boldsymbol{\mathcal{Q}}$.~\hfill\IEEEQED
\end{lemma}
\begin{proof}[Proof of Lemma~\ref{lemma2}]
For  any $\boldsymbol{\chi}$, we have $F_{\boldsymbol{\chi}}(\theta, V)=\min_{ \boldsymbol{\Omega}\left( \boldsymbol{\chi} \right)}\big[F^\dagger_{\boldsymbol{\chi}}(\theta, V, \boldsymbol{\Omega}(\boldsymbol{\chi}))+ \nu  G_{\boldsymbol{\chi}}(V,\boldsymbol{\Omega}(\boldsymbol{\chi})) \big] \geq \min_{ \boldsymbol{\Omega}\left( \boldsymbol{\chi} \right)} F^\dagger_{\boldsymbol{\chi}}(\theta, V, \boldsymbol{\Omega}(\boldsymbol{\chi}))+ \nu \min_{ \boldsymbol{\Omega}\left( \boldsymbol{\chi}  \right)} G_{\boldsymbol{\chi}}(V,\boldsymbol{\Omega}(\boldsymbol{\chi})) $. Besieds, $F_{\boldsymbol{\chi}}(\theta, V)\leq \min_{ \boldsymbol{\Omega}\left( \boldsymbol{\chi}  \right)}F^\dagger_{\boldsymbol{\chi}}(\theta, V, \boldsymbol{\Omega}(\boldsymbol{\chi}))+ \nu  G_{\boldsymbol{\chi}}(V,\boldsymbol{\Omega}^\dagger (\boldsymbol{\chi} )) $, where $\boldsymbol{\Omega}^\dagger =\arg \min_{ \boldsymbol{\Omega}\left( \boldsymbol{\chi}  \right)}F^\dagger_{\boldsymbol{\chi}}(\theta, V, \boldsymbol{\Omega}(\boldsymbol{\chi}))$. Since $\mathbb{E}\big[\min_{ \boldsymbol{\Omega}\left( \boldsymbol{\chi}  \right)} F^\dagger_{\boldsymbol{\chi}}(\theta, V, \boldsymbol{\Omega}(\boldsymbol{\chi}))\big|\mathbf{Q}\big]=0$ according to (\ref{defapp}), and $F^\dagger_{\boldsymbol{\chi}}$ and $G_{\boldsymbol{\chi}}$ are all smooth and bounded functions, we have $\big|\mathbb{E}\big[F_{\boldsymbol{\chi}}(\theta, V)\big|\mathbf{Q}\big]\big| = o(1)$ (w.r.t. $\tau$).
\end{proof}

We establish the following lemma to prove Lemma \ref{cor1}.
\begin{lemma}		\label{lemma3}
	Suppose $\mathbf{E}\big[F_{\boldsymbol{\chi}}(\theta^\ast, V^\ast)\big|\mathbf{Q} \big]= 0$ for all $\mathbf{Q}$ together with the transversality condition in (\ref{trans1})  has a unique solution $(\theta^*, V^\ast)$. If $(\theta, V)$ satisfies the approximate Bellman equation in (\ref{bellman2}) and the transversality condition in (\ref{trans1}), then $\theta=\theta^\ast+o\left(1 \right)$, $V\left(\mathbf{Q} \right)=V^\ast\left(\mathbf{Q} \right)+o\left(1 \right)$ for all  $\mathbf{Q} $, where $o(1)$ asymptotically goes to zero as $\tau$ goes to zero.~\hfill\IEEEQED	
\end{lemma}
\begin{proof}	[Proof of Lemma \ref{lemma3}]
Suppose for some $\mathbf{Q}'$, $V\left(\mathbf{Q}' \right)=V^\ast\left(\mathbf{Q}' \right)+\mathcal{O}\left(1 \right)$ (w.r.t. $\tau$). From Lemma~\ref{lemma2}, we have $\big|\mathbb{E}\big[F_{\boldsymbol{\chi}}(\theta, V)\big|\mathbf{Q}\big]\big| = o(1)$ (w.r.t. $\tau$). Letting $\tau \rightarrow 0$, we have $\mathbb{E}\big[F_{\boldsymbol{\chi}}(\theta, V)\big|\mathbf{Q}\big] = 0$ for all $\mathbf{Q}$ and the  transversality condition in (\ref{trans1}). However, $V\left(\mathbf{Q}' \right) \neq V^\ast\left(\mathbf{Q}' \right)$ due to  $V\left(\mathbf{Q}' \right)=V^\ast\left(\mathbf{Q}' \right)+\mathcal{O}\left(1 \right)$. This contradicts the condition that $(\theta^*, V^\ast)$ is a unique solution of $F_{\boldsymbol{\chi}}(\theta^\ast, V^\ast) = 0$ for all $\mathbf{Q}$  and the transversality condition in (\ref{trans1}). Hence, we must have $V\left(\mathbf{Q} \right)=V^\ast\left(\mathbf{Q} \right)+o\left(1 \right)$ for all  $\mathbf{Q} $. Similarly, we can establish $\theta= \theta^\ast + o(1)$.
\end{proof}
\end{proof}

\emph{2. Relationship between the Approximate Bellman Equation and the PDE:}
For notation convenience, we write $J\left(\mathbf{Q}\right)$ in place of $J\left(\mathbf{Q};L^\delta \right)$. It can be  observed that if ($c^{\infty}, \{ J\left(\mathbf{Q} \right) \}$) satisfies  (\ref{bellman3}), it also satisfies  (\ref{bellman2}). Furthermore, since $J\left(\mathbf{Q}\right)=\mathcal{O}(\sum_{k=1}^K Q_k^3)$, then $\lim_{t \rightarrow \infty}\mathbb{E}^{\boldsymbol{\Omega}}\left[J\left(\mathbf{Q}(t)\right) \right]< \infty$ for any admissible policy $\boldsymbol{\Omega}$. Hence, $J\left(\mathbf{Q} \right)=\mathcal{O}(\sum_{k=1}^K Q_k^3 )$ satisfies the transversality condition in (\ref{trans1}). Next, we show that the optimal  policy $\boldsymbol{\Omega}^{J \ast}$ obtained from (\ref{bellman3}) is an admissible control policy according to Definition \ref{admissibledis}.

Define a \emph{Lyapunov function} as $L(\mathbf{Q})=J \left(\mathbf{Q}\right)$. We  define the \emph{conditional queue drift} as $\Delta (\mathbf{Q})= \mathbb{E}^{\boldsymbol{\Omega}^{J \ast}}\big[\sum_{k=1}^K\left(Q_k(t+1)-Q_k(t) \right)\big|\mathbf{Q}(t)=\mathbf{Q} \big]$ and \emph{conditional Lyapunov drift} as $\Delta L(\mathbf{Q})= \mathbb{E}^{\boldsymbol{\Omega}^{J \ast}}\big[L(\mathbf{Q}(t+1))-L(\mathbf{Q}(t))\big|\mathbf{Q}(t) =\mathbf{Q}\big]$.  We first have the following relationship between $\Delta (\mathbf{Q})$ and $\Delta L(\mathbf{Q})$:	
\begin{align}
		\Delta L(\mathbf{Q}) &\geq\mathbb{E}^{\boldsymbol{\Omega}^{J \ast}}\left[  \sum_{k=1}^K \frac{\partial L(\mathbf{Q})}{\partial Q_k}\left(Q_k(t+1)-Q_k(t) \right)	\bigg|\mathbf{Q}(t)=\mathbf{Q}  \right]\nonumber\\ &\overset{(a)}\geq  \Delta (\mathbf{Q})\quad 	 \label{res2222usew}	
	\end{align}
if at least one of $\{Q_k: \forall k\}$ is sufficiently large, where $(a)$ is due to the condition that $\left\{\frac{\partial J\left(\mathbf{Q}; \boldsymbol{\epsilon} \right)}{\partial Q_k}: \forall k \right\}$  are increasing functions of all $Q_k$.

Since $(\lambda_1, \dots, \lambda_K)$ is strictly interior to the stability region $\Lambda$, there exists   $\overline{\boldsymbol{\lambda}}=(\lambda_1+\kappa_1, \dots, \lambda_K+\kappa_K)\in \Lambda$  for some positive $\boldsymbol{\kappa}=\{\kappa_k: \forall k\}$ \cite{comparison2}. From \emph{Corollary 1} of \cite{neelyitp}, there exists a stationary randomized QSI-independent policy $\widetilde{\boldsymbol{\Omega}}$  such that
\begin{align}
	&\hspace{1cm}\sum_{k=1}^K \mathbb{E}^{\widetilde{\boldsymbol{\Omega}}}\left[ \gamma_k P_k \big|\mathbf{Q}(t)=\mathbf{Q}  \right]=\overline{P}(\boldsymbol{\kappa})\nonumber\\
&\mathbb{E}^{\widetilde{\boldsymbol{\Omega}}}\left[ \sigma_k C_k(\mathbf{H}, \mathbf{P})\big|\mathbf{Q}(t)=\mathbf{Q}  \right] \geq \lambda_k+\kappa_k, \quad \forall k		\label{49ersimpor}
\end{align}
where $\overline{P}(\boldsymbol{\kappa})$ is the minimum average power for the system stability  when the arrival rate is $\overline{\boldsymbol{\lambda}}$. The Lyapunov drift $\Delta L(\mathbf{Q})$ is given by
\begin{align}
	& \Delta L(\mathbf{Q})+ \mathbb{E}^{\boldsymbol{\Omega}^{J \ast}} \left[ \sum_{k=1}^K  \gamma_k P_k \tau  \bigg|\mathbf{Q}(t)=\mathbf{Q}  \right] \notag \\
	\approx  & \sum_{k=1}^K \frac{\partial L(\mathbf{Q})}{\partial Q_k} \lambda_k \tau  \nonumber\\
&+ \mathbb{E}^{\boldsymbol{\Omega}^{J \ast}}\left[  \sum_{k=1}^K  \left(\gamma_k P_k  \tau -\frac{\partial L(\mathbf{Q})}{\partial Q_k}\sigma_kC_k(\mathbf{H}, \mathbf{P})\tau\right)\bigg|\mathbf{Q}(t)=\mathbf{Q}  \right] \notag \\
	\overset{(b)}\leq  & \sum_{k=1}^K \frac{\partial L(\mathbf{Q})}{\partial Q_k} \lambda_k \tau  \nonumber\\
&+ \mathbb{E}^{\widetilde{\boldsymbol{\Omega}}}\left[  \sum_{k=1}^K  \left(\gamma_k  P_k  \tau -\frac{\partial L(\mathbf{Q})}{\partial Q_k}\sigma_kC_k(\mathbf{H}, \mathbf{P})\tau\right)\bigg|\mathbf{Q}(t)=\mathbf{Q}  \right]\notag \\
	\overset{(c)} \leq  & -\sum_{k=1}^K \frac{\partial L(\mathbf{Q})}{\partial Q_k} \kappa_k \tau  +\overline{P}(\boldsymbol{\kappa}) \tau  \label{asdadaappne}
\end{align}
if at least one of $\{Q_k: \forall k\}$ is sufficiently large, where $(b)$ is due to $\boldsymbol{\Omega}^{J \ast}$ achieves the minimum of  (\ref{bellman3}) and $(c)$ is due to (\ref{49ersimpor}).  Combining (\ref{asdadaappne}) with (\ref{res2222usew}), we have $\Delta (\mathbf{Q})\leq \Delta L(\mathbf{Q}) \leq  -\sum_{k=1}^K \frac{\partial L(\mathbf{Q})}{\partial Q_k} \kappa \tau  +\overline{P}(\boldsymbol{\kappa}) \tau	 <0 $ if at least one of $\{Q_k: \forall k\}$ is sufficiently large. Therefore, $\mathbb{E}\big[A_k-G_k(\mathbf{H},\boldsymbol{\Omega}^{J \ast}(\boldsymbol{\chi}))   \big|\mathbf{Q}\big]< 0$ when $Q_k > \overline{Q}_k$ for some large $\overline{Q}_k$. Let $\phi_k(r,\mathbf{Q})=\ln \big(\mathbb{E}\big[e^{\left(A_k-G_k(\mathbf{H},\boldsymbol{\Omega}^{J \ast}(\boldsymbol{\chi})  ) \right)r}\big| \mathbf{Q} \big] \big)$ be the \emph{semi-invariant moment generating function} of $A_k-G_k\big(\mathbf{H},\boldsymbol{\Omega}^{J \ast}(\boldsymbol{\chi})\big)$.  Then,  $\phi_k(r,\mathbf{Q})$ will have a unique positive root $r_k^\ast(\mathbf{Q})$ ($\phi_k(r_k^\ast(\mathbf{Q}),\mathbf{Q})=0$) \cite{dspgal}.  Let $r_k^\ast= r_k^\ast(\overline{\mathbf{Q}})$, where $\overline{\mathbf{Q}}=(\overline{Q}_1, \dots, \overline{Q}_K)$. Using the Kingman bound \cite{dspgal} result that $F_k(x) \triangleq \Pr\big[ Q_k \geq x \big] \leq e^{-r_k^\ast x} $, if $x \geq \overline{x}_k$ for sufficiently large $\overline{x}_k$, we have
\begin{align}
	 &\mathbb{E}^{\boldsymbol{\Omega}^{J \ast}} \left[J\left(\mathbf{Q}\right) \right] \nonumber\\
\leq& C \sum_{k=1}^K \mathbb{E}^{\boldsymbol{\Omega}^{J \ast}} \left[ Q_k^3  \right]=C\sum_{k=1}^K \left[\int_0^{\infty} \Pr \left[Q_k^3 >s \right] \mathrm{d}s \right] 	\notag \\
	 \leq & C \sum_{k=1}^K \left[ \int_{0}^{\overline{x}_k^3}F_k(s^{1/3}) \mathrm{d}s  + \int_{\overline{x}_k^3}^{\infty} F_k(s^{1/3})\mathrm{d}s \right] \nonumber\\
\leq&  C \sum_{k=1}^K \left[\overline{x}_k^3+ \int_{\overline{x}_k^3}^{\infty}  e^{-r_k^\ast s^{1/3}}  \mathrm{d}s \right]	 < \infty
\end{align}
for some constant $C$. Therefore, $\boldsymbol{\Omega}^{J \ast}$ is an admissible control policy and we have  $V \left(\mathbf{Q} \right)=J\left(\mathbf{Q}\right)$ and $\theta=c^\infty$.

Combining Corollary \ref{cor1}, we have $V^\ast\left(\mathbf{Q} \right)=J\left(\mathbf{Q}\right)+o(1)$ and $\theta^\ast=c^\infty+o(1)$ for sufficiently small $\tau$.

\section*{Appendix C: Proof of Lemma~\ref{theorem3}}	
We first prove that $ J \left(\mathbf{Q};0\right)=\sum_{k=1}^K J_k\left(Q_k \right)$. The PDE in (\ref{bellman3}) for the base system  is
\begin{align}\label{sumperHJBapp}	
&\mathbb{E}\bigg[ \min_{\boldsymbol{\Omega}(\boldsymbol{\chi})}  \bigg[\sum_{k=1}^K \bigg( \beta_k \frac{Q_k}{\lambda_k}+\gamma_k P_k \\
&\hspace{1cm}+ \frac{\partial  J \left(\mathbf{Q};0\right)} {\partial Q_k} \Big(\lambda_k - \sigma_k C_k\big(\mathbf{H},\mathbf{P} \big)\Big) \bigg)\bigg]\bigg| \mathbf{Q} \bigg] -c^\infty=0\nonumber
\end{align}
We have the following lemma to prove the decomposable structures of $J \left(\mathbf{Q};0\right)$ and $c^\infty$ in (\ref{sumperHJBapp}).
\begin{lemma}	[Decomposed Optimality Equation]	\label{decomplem}
	Suppose there exist $c_k^\infty$ and $J_k \left(Q_k \right) \in \mathbb{C}^2\left(\mathbb{R}_+ \right)$ that solve the following per-flow optimality equation (PFOE):
	\begin{align}\label{perflowhjb}
&\mathbb{E}\bigg[  \min_{P_k \geq 0}\bigg[\beta_k \frac{Q_k}{\lambda_k}+\gamma_k P_k\\
 &\hspace{1cm}+ J_k'(Q_k) \Big(\lambda_k - \sigma_k C_k^0\big({H}_{kk},P_k \big)\Big) \bigg]\bigg|Q_k \bigg] -c_k^\infty=0\nonumber
	\end{align}
	where $C_k^0\big({H}_{kk},P_k \big)=\log_2 \big(1+\frac{1}{\Gamma}\frac{H_{kk}P_k}{N_0} \big)$. Then, $ J\left( \mathbf{Q};0\right)=\sum_{k=1}^K J_k\left(Q_k \right)$ and $c^\infty=\sum_{k=1}^K c_k^\infty$ satisfy  (\ref{sumperHJBapp}).~\hfill\IEEEQED
\end{lemma}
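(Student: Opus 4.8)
The plan is to exploit the interference-free structure of the base system and then verify that the separable ansatz converts the $K$-dimensional PDE into $K$ decoupled per-flow equations. First I would argue that the term $\sigma_k C_k(\mathbf{H},\mathbf{P})$ appearing in (\ref{sumperHJBapp}) reduces to $\sigma_k C_k^0(H_{kk},P_k)$ whenever $L^\delta=0$. Indeed, conditioned on $\sigma_k=1$, every active interferer $j\neq k$ either lies inside the carrier-sensing neighbourhood $\mathcal{N}_k(\delta)$, in which case $\sigma_j=0$ by the carrier-sensing constraint of Assumption~\ref{a_MAC}, or satisfies $d_{kj}>\delta$, in which case $L_{kj}\le L^\delta=0$ forces $H_{kj}=0$ almost surely (since $H_{kj}$ is exponential with mean $L_{kj}$). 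Either way the interference sum in (\ref{capacity}) vanishes, so $C_k$ depends on $P_k$ alone.

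Next I would substitute the separable ansatz $J(\mathbf{Q};0)=\sum_{k=1}^K J_k(Q_k)$ into (\ref{sumperHJBapp}), which gives $\partial J(\mathbf{Q};0)/\partial Q_k = J_k'(Q_k)$ and produces no cross-derivatives. The bracketed objective then becomes a sum $\sum_k g_k$ in which the $k$-th summand $g_k=\beta_k Q_k/\lambda_k+\gamma_k P_k+J_k'(Q_k)(\lambda_k-\sigma_k C_k^0(H_{kk},P_k))$ contains the decision variable $P_k$ and no other transmit power. Because $P_k$ appears only in $g_k$, the joint minimization separates pointwise for each realization of $(\boldsymbol{\sigma},\mathbf{H})$, namely $\min_{\mathbf{P}}\sum_k g_k=\sum_k \min_{P_k\ge 0} g_k$.

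It then remains to move the expectation inside the sum. The key observation is that this step requires only linearity of expectation, not independence, so the carrier-sensing correlations carried by $\boldsymbol{\sigma}$ are immaterial: $\mathbb{E}[\sum_k \min_{P_k} g_k \mid \mathbf{Q}]=\sum_k \mathbb{E}[\min_{P_k} g_k \mid \mathbf{Q}]$. Moreover, since $\boldsymbol{\sigma}$ and $\mathbf{H}$ are i.i.d. across slots and independent of the queue state, the conditional law of $(\sigma_k,H_{kk})$ given $\mathbf{Q}$ coincides with its marginal given $Q_k$, so each term is exactly the left-hand side of the per-flow optimality equation (\ref{perflowhjb}) and therefore equals $c_k^\infty$. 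Summing over $k$ yields $\sum_k c_k^\infty=c^\infty$, and subtracting $c^\infty$ gives zero, which is precisely (\ref{sumperHJBapp}); hence $J(\mathbf{Q};0)=\sum_k J_k(Q_k)$ and $c^\infty=\sum_k c_k^\infty$ solve the base-system PDE.

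I expect the main obstacle to be the rigorous justification of the decoupling of $\min_{\mathbf{P}}$ rather than any algebra: one must confirm both that the base-system rate carries no residual dependence on the other transmit powers and that the additive value function contributes no mixed partial derivatives, since either effect would recouple the flows and invalidate the pointwise separation of the minimization.
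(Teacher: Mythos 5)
Your proposal is correct and follows exactly the route the paper intends: the paper's own ``proof'' of Lemma~\ref{decomplem} is a single remark that the $K$ queue dynamics decouple when $L^\delta=0$, with the details omitted for conciseness. Your argument supplies precisely those omitted details --- the vanishing of the interference term under the carrier-sensing dichotomy, the separable ansatz giving $\partial J/\partial Q_k = J_k'(Q_k)$, the pointwise decomposition of $\min_{\mathbf{P}}$ since each summand involves only $P_k$, and the passage of the expectation through the sum by linearity --- so it is a faithful completion of the paper's argument rather than a different one.
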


Lemma \ref{decomplem} can be proved using the fact that the dynamics of the $K$ queues at the transmitters are decoupled when $L^\delta=0$. The details are omitted for conciseness.

Next, we solve the PFOE in  (\ref{perflowhjb}). The optimal transmit power from (\ref{perflowhjb}) is given by
\begin{equation}    \label{acoeopt}
P_k^*=\left(\frac{J_k'\left(Q_k \right) \sigma_k}{\gamma_k \ln2}-\frac{\Gamma N_0}{H_{kk}}\right)^+
\end{equation}
Substituting the optimal transmit power $P_k^\ast$ to (\ref{perflowhjb}), and using the fact that $\sigma$ follows a Bernoulli distribution with mean $\frac{1}{|\mathcal{N}_k|+1}$ (from Assumption \ref{a_MAC}) and  $H_{kk}$ follows a negative exponential distribution with mean $L_{kk}$ (from Assumption \ref{a_CSI}), we calculate the expectations in (\ref{perflowhjb}) as follows:
\begin{align}\label{calD1}
	&\mathbb{E}\left[ \gamma_k P_k^\ast \big| Q_k \right] \nonumber\\
=&\frac{1}{(|\mathcal{N}_k|+1) L_{kk}}\int_{\frac{N_0\Gamma \gamma_k \ln2}{J_k'\left(Q_k \right)}}^\infty \left(\frac{J_k'\left(Q_k \right)}{ \ln2}-\frac{N_0\Gamma \gamma_k }{x}\right) e^{-x/L_{kk}}\mathrm{d}x \notag \\
    =& \frac{1}{|\mathcal{N}_k|+1}\Bigg(\frac{J_k'\left(Q_k \right)}{\ln2}e^{-\frac{N_0\Gamma \gamma_k \ln2}{J_k'\left(Q_k \right)L_{kk}}}\nonumber\\
&\hspace{2cm}-\frac{\gamma_k  N_0\Gamma}{L_{kk}}E_1\left(\frac{N_0\Gamma \gamma_k \ln2}{J_k'\left(Q_k \right)L_{kk}}\right)\Bigg)
\end{align}
Using the same integration region, we have
\begin{align}\label{calD2}
	& \mathbb{E}\left[\sigma_k\log_2\big(1+P_k^*H_{kk}/(\Gamma N_0)\big) \big| Q_k \right]\nonumber\\
=& \frac{1}{(|\mathcal{N}_k|+1) \ln 2}E_1\left(\frac{N_0\Gamma \gamma_k \ln 2}{J_k'\left(Q_k \right)L_{kk}}\right)
\end{align}
where  $E_1(z)  \triangleq \int_z^{\infty} \frac{e^{-t}}{t}\mathrm{d}t $ is the exponential integral function. We then calculate ${c_k^{\infty}}$. Since (\ref{perflowhjb}) should hold when $Q_k=0$, we have $\hspace{1cm}c_k^\infty=\mathbb{E}\left[ \gamma_k P_k^\ast \big| Q_k=0 \right]$ and $\mathbb{E}\left[\sigma_k\log_2\big(1+P_k^*H_{kk}/(\Gamma N_0)\big) \big| Q_k=0 \right]=\lambda_k$. Substituting these into (\ref{calD1}), we can calculate $c_k^\infty$ as shown in Lemma~\ref{theorem3}. Substituting (\ref{calD1}), (\ref{calD2}), and $c_k^\infty$ into (\ref{perflowhjb}) and letting $a_k \triangleq \frac{N_0\gamma_k\ln 2}{L_{kk}}$, we have the following ODE:
\begin{align}\label{finalode}	
&    \beta_k \frac{Q_k}{\lambda_k}+\frac{1}{|\mathcal{N}_k|+1}\Bigg(\frac{J_k'\left(Q_k \right)}{\ln2}e^{-\frac{N_0\Gamma \gamma_k \ln2}{J_k'\left(Q_k \right)L_{kk}}} \nonumber\\
&-\frac{\gamma_k  N_0 \Gamma}{L_{kk}}E_1\left(\frac{N_0\Gamma \gamma_k \ln2}{J_k'\left(Q_k \right)L_{kk}}\right)\Bigg)-{c_k^{\infty}}+ J_k'\left(Q_k \right)\lambda_k   \notag \\
&    - J_k'\left(Q_k \right)\frac{1}{(|\mathcal{N}_k|+1) \ln 2}E_1\left(\frac{N_0\Gamma \gamma_k \ln 2}{J_k'\left(Q_k \right)L_{kk}}\right) =0
\end{align}
According to Section 0.1.7.3 of \cite{HandODE}, we can obtain the parametric solution of (\ref{finalode}) as shown in (\ref{perflow}) in Lemma~\ref{theorem3}.

\section*{Appendix D: Proof of Corollary~\ref{cor2}}	
First, we obtain the highest order term of $J_k\left(Q_k \right)$. The series expansions of $E_1(x)$ and $e^x$ are given by
\begin{equation}	\label{asypototoc}
E_1(x) = -\gamma_{eu} - \ln x - \sum_{n=1}^{\infty} \frac{\left(-x \right)^n}{n ! n}, \quad e^x =\sum_{n=0}^{\infty} \frac{x^n}{n!}
\end{equation}
Using (\ref{asypototoc}), (\ref{perflow})  induces that  $Q_k(y)=\mathcal{O}(y\ln y)$ and $J_k(y)=\mathcal{O}(y^2\ln y)$ as $y\rightarrow \infty$. In other words, we have $\delta_1   y\ln y \leq Q_k(y) \leq \delta_1' y\ln y$ when $y \rightarrow \infty$ for some constants $\delta_1$ and $\delta_1'$, and $\delta_2 y^2\ln y \leq J_k(y) \leq \delta_2' y^2\ln y$ when $y \rightarrow \infty$ for some constants $\delta_2$ and $\delta_2'$ .Therefore,
\begin{align}
\delta_2\left(\frac{Q_k/\delta'_1}{W(Q_k/\delta'_1)}\right)^2 \ln\left(\frac{Q_k/\delta'_1}{W(Q_k/\delta'_1)}\right) \leq J_k(y) \nonumber\\
\leq \delta'_2\left(\frac{Q_k/\delta_1}{W(Q_k/\delta_1)}\right)^2 \ln\left(\frac{Q_k/\delta_1}{W(Q_k/\delta_1)}\right)
\end{align}
where $W$ is the \emph{Lambert} function \cite{lambert}. Since $W(x)=\mathcal{O}(\ln x)$ for sufficiently large $x$ \cite{lambert}, we conclude that $J_k \left( Q_k \right) = \mathcal{O} \left(\frac{Q_k^2}{\ln{Q_k}} \right)$   as $Q_k \rightarrow \infty$.

Next, we obtain the coefficient of the highest order term $\frac{Q_k^2}{\ln{Q_k}}$. Using (\ref{asypototoc}), the PFOE equation in (\ref{finalode}) implies
\begin{align}\label{bellmanE1}
	 J_k'\left(Q_k \right)\ln\big(J_k'\left(Q_k \right)\big) = \frac{\beta_k (|\mathcal{N}_k|+1) \ln 2}{\lambda_k } Q_k + o(Q_k)
\end{align}
Since $J_k\left(Q_k \right) = \mathcal{O} \left(\frac{Q_k^2}{\ln \left( Q_k \right)}  \right)$, there exist constants $\delta$ and $\delta'$ such that
\begin{align}
	& \delta \frac{Q_k^2}{\ln \left( Q_k \right)}  \leq  J_k\left(Q_k \right) \leq \delta' \frac{Q_k^2}{\ln \left( Q_k \right)} 	\notag \\
	\Rightarrow  &   \Delta \frac{Q_k}{\ln \left( Q_k \right)}	\leq J_k'\left(Q_k \right) \leq  \Delta' \frac{Q_k}{\ln \left( Q_k \right)} 	\notag \\
	\Rightarrow & \ln \left(   \Delta \right) +  \ln \left( Q_k \right) - \ln \ln \left( Q_k \right)	\leq \ln \left( J_k'\left(Q_k \right) \right) \notag \\
&\leq \ln \left( \Delta' \right) + \ln \left( Q_k \right) - \ln \ln \left( Q_k \right)\notag \\
	\Rightarrow&   \Delta  Q_k + o\left( Q_k\right) \leq J_k'\left(Q_k \right) \ln \left( J_k'\left(Q_k \right) \right) \leq  \Delta'  Q_k + o\left( Q_k\right)
\end{align}
 where  $\Delta$ and $\Delta'$  are some constants that are independent of the system parameters.  Comparing it with (\ref{bellmanE1}), we have  $\Delta, \Delta' \propto  \frac{\beta_k (|\mathcal{N}_k|+1) \ln 2}{\lambda_k }	 \Rightarrow    \delta,  \delta' \propto  \frac{\beta_k (|\mathcal{N}_k|+1) \ln 2}{2\lambda_k }$, where $x \propto y$ means that $x$ is proportional to $y$. Finally, we conclude that $J_k\left(Q_k \right) = \frac{\beta_k (|\mathcal{N}_k|+1)}{2\lambda_k }\frac{Q_k^2}{\log_2 \left( Q_k \right)} +o\left(\frac{Q_k^2}{\log_2 \left( Q_k \right)} \right)$ and $J_k'\left(Q_k \right) = \frac{\beta_k (|\mathcal{N}_k|+1) }{\lambda_k }\frac{Q_k}{\log_2 \left( Q_k \right)}+o\left(\frac{Q_k}{\log_2 \left( Q_k \right)} \right)$.

\section*{Appendix E: Proof of Theorem \ref{theorem4}}	
We first write $H_{kj}=L_{kj}\widetilde{H}_{kj}$, where $\widetilde{H}_{kj}$ is the short-term fading path gain. Taking the first order Taylor expansion of the L.H.S. of the PFOE in (\ref{bellman3}) at $L_{kj}=0$ ($\forall k \neq j, d_{kj}>\delta$), $P_k=P_k^{\ast}$ (where $P_k^\ast$ minimize the L.H.S. of (\ref{perflowhjb})), and using  parametric optimization analysis \cite{paraanay},  we have the following result regarding the approximation error:
\begin{align}
J\left(\mathbf{Q}; L^\delta \right)-J\left(\mathbf{Q};  0  \right) =\sum_{i=1}^K \sum_{j \neq i, \atop j \notin \mathcal{N}_i(\delta)} L_{ij}  \widetilde{J}_{ij}(\mathbf{Q})+ \mathcal{O}((L^\delta)^2)	\label{tayloee}
\end{align}
where we have  $L^\delta=\mathcal{O}(\frac{1}{\delta^2})$ according to  Assumption \ref{long}. $\widetilde{J}_{ij}(\mathbf{Q})$  captures the coupling terms in $J\left(\mathbf{Q}\right)$  satisfying:
\begin{small}\begin{align}	
	\hspace{-0.4cm} \sum_{k=1}^K \left(\lambda_k  - \mathbb{E} \left[\left.  \sigma_k \log_2\left(1+\frac{P_k L_{kk} \widetilde{H}_{kk}}{\Gamma N_0}\right) \right| \mathbf{Q}  \right]  \right) \frac{\partial \widetilde{J}_{ij}\left(\mathbf{Q}\right) }{\partial Q_k} \nonumber\\
  + \mathbb{E} \left[\left.\frac{J_i'\left(Q_i \right)}{\ln 2}   \frac{\sigma_i P_i^{\ast}L_{ii}\widetilde{H}_{ii}\widetilde{H}_{ij}}{\Gamma N_0^2+N_0P_i^{\ast} L_{ii}\widetilde{H}_{ii}}  P_j^{\ast} \right| \mathbf{Q}  \right] =\widetilde{\theta}_{ij}  \label{PDEappd}
\end{align}\end{small}with boundary condition $\widetilde{J}_{ij}\left(\mathbf{Q} \right)\big|_{Q_i = 0}=0$ or $\widetilde{J}_{ij}\left(\mathbf{Q}  \right)\big|_{Q_j =0}=0$, and $\widetilde{\theta}_{kj}=\frac{\partial \theta \left(L \right)}{\partial L_{kj}}$ is constant (where we treat $\theta$  as a function of $\left\{L_{ij}: \forall i \neq j \right\}$). According to (\ref{calD1}) and (\ref{calD2}), we have
\begin{small}\begin{align}
\mathbb{E} \left[\left. \sigma_k \log_2\left(1+\frac{P_k L_{kk} \widetilde{H}_{kk}}{N_0}\right) \right| \mathbf{Q}  \right]= \frac{1}{(|\mathcal{N}_k(\delta)|+1) \ln 2}\mathcal{O}\left(\ln Q_k \right)\nonumber
\end{align}
\begin{align}
&\mathbb{E} \left[\frac{J_i'\left(Q_i \right)}{\ln 2} \left.\frac{\sigma_i P_i^{\ast}L_{ii}\widetilde{H}_{ii}\widetilde{H}_{ij}}{\Gamma N_0^2+N_0P_i^{\ast} L_{ii}\widetilde{H}_{ii}}  P_j^{\ast} \right| \mathbf{Q}  \right]\nonumber\\
=&\frac{\mathcal{O}\left(J_i'(Q_i)J_j'(Q_j)\right)}{(|\mathcal{N}_i(\delta)|+1) (|\mathcal{N}_j(\delta)|+1)  (\ln 2)^2 \gamma_j N_0}\nonumber\\
=&\frac{\beta_i \beta_j}{(\ln 2)^2 \lambda_i \lambda_j \gamma_j N_0}\mathcal{O}\left(\frac{Q_iQ_j}{\log_2 Q_i \log_2 Q_j}\right)\nonumber
\end{align}\end{small}
Substituting these calculation results into (\ref{PDEappd}), using \emph{3.8.4.7}  of \cite{handbookPDE} and taking into account the boundary conditions, we obtain that $\widetilde{J}_{ij}\left(\mathbf{Q}\right) = D_{ij}\mathcal{O}\left(\frac{Q_i^2 Q_j}{(\log_2\left(Q_i \right))^2\log_2\left(Q_j \right)}\right)$, where $D_{ij}=\frac{\beta_i\beta_j(|\mathcal{N}_i(\delta)|+1)}{2 (\ln2)    \lambda_i \lambda_j \gamma_j N_0}$. Substituting it to (\ref{tayloee}), we obtain the approximation error in Theorem~\ref{theorem4}.

\section*{Appendix F: Proof of Corollary \ref{collaryAlg}}
According to the definition of $\sigma_k$,  the problem in (\ref{utility}) is equivalent to
\begin{align}	\label{obj11}
	\min _{\mathbf{P} }  \ \sum_{k \in \mathcal{A}(\delta)}  \Big(\gamma_k  \sigma_k P_k  - \frac{\partial \widetilde{V}\left(\mathbf{Q}\right)}{\partial Q_k}   C_k\left(\mathbf{H}, \left\{\sigma_k P_k:  k \in \mathcal{A}(\delta)\right\}\right)  \Big)
\end{align}
where $\mathcal{A}(\delta)=\{k: \sigma_k=1\}$ is the set of active transmitters for a  given $\delta$, $ C_k\left(\mathbf{H}, \left\{\sigma_k P_k:  k\in \mathcal{A}(\delta)\right\}\right) =\log_2\left(1+\frac{1}{\Gamma}\frac{H_{kk}\sigma_k P_k}{N_0+\sum_{j\neq k, j \in \mathcal{A}(\delta)}H_{kj} \sigma_j P_j}\right)$.  Denote the objective function in (\ref{obj11}) as $ f\left( \mathbf{P}, L^\delta \right)$. We have the following lemma on the convexity for $f\left( \mathbf{P}, L^\delta \right)$.
\begin{lemma}	[Convexity of $f\left( \mathbf{P}, L^\delta \right)$ for Sufficiently Small $L^\delta$]	\label{lemma8}
	$f\left( \mathbf{P}, L^\delta \right)$ is a convex function of $\mathbf{P}=\left\{P_k:  k\in \mathcal{A}(\delta) \right\}$ when $L^\delta$ is sufficiently small.~\hfill\IEEEQED
\end{lemma}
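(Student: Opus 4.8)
The plan is to argue entirely at the level of the Hessian $\nabla^2_{\mathbf P} f(\mathbf P,L^\delta)$ and to treat the cross‑channel gains as a vanishing perturbation of the decoupled base system. First I would discard the affine part: the term $\sum_{k}\gamma_k\sigma_k P_k$ is linear in $\mathbf P$, so it suffices to show that $g(\mathbf P)\triangleq\sum_{k\in\mathcal A(\delta)} w_k\,C_k\!\left(\mathbf H,\{\sigma_kP_k\}\right)$ is concave, where the weights $w_k\triangleq\partial\widetilde V(\mathbf Q)/\partial Q_k>0$ are fixed and strictly positive (by the monotonicity of the approximate priority function in Lemma~\ref{theorem3} and Theorem~\ref{theorem4}). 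The structural identity I would exploit is $C_k=\log_2 S_k-\log_2 I_k$, where $I_k=N_0+\sum_{j\neq k}H_{kj}\sigma_jP_j$ is the interference‑plus‑noise and $S_k=I_k+H_{kk}\sigma_kP_k/\Gamma$ is the total received power. Both are logarithms of affine functions of $\mathbf P$, so $\nabla^2 C_k=-\frac{1}{\ln2\,S_k^2}\mathbf g_k\mathbf g_k^{\top}+\frac{1}{\ln2\,I_k^2}\mathbf h_k\mathbf h_k^{\top}$ is a difference of two rank‑one terms, where $\mathbf h_k$ collects the cross gains $H_{kj}$ ($j\neq k$) and $\mathbf g_k=\mathbf h_k+(H_{kk}/\Gamma)\mathbf e_k$.

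Next I would settle the base case $L^\delta=0$. By the carrier‑sensing constraint in Assumption~\ref{a_MAC}, any two simultaneously active nodes lie outside each other's sensing range, so every cross gain among $\mathcal A(\delta)$ is bounded by $L^\delta$; at $L^\delta=0$ they vanish, giving $\mathbf h_k=\mathbf 0$ and $\mathbf g_k=(H_{kk}/\Gamma)\mathbf e_k$. Hence $\nabla^2 g|_{L^\delta=0}$ is \emph{diagonal} with strictly negative entries $-w_k(H_{kk}/\Gamma)^2/(\ln2\,S_k^2)$, i.e.\ $g$ is strictly concave and $f(\cdot,0)$ is convex. This simply recovers the decoupled base system of Definition~\ref{base}, and it is the anchor for the perturbation.

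I would then run a continuity/diagonal‑dominance argument for small $L^\delta$. The entries of $\nabla^2 g$ are smooth in $\{L_{kj}\}$; the ``bad'' positive‑semidefinite correction $\frac{w_k}{\ln2\,I_k^2}\mathbf h_k\mathbf h_k^{\top}$ has entries of order $(L^\delta)^2$, while the off‑diagonal entries of the ``good'' part are $\mathcal O(L^\delta)$, since each carries at least one cross gain $H_{km}$. Concretely I would show $[\nabla^2 g]_{mm}=-w_m(H_{mm}/\Gamma)^2/(\ln2\,S_m^2)+\mathcal O((L^\delta)^2)$ and $[\nabla^2 g]_{mn}=\mathcal O(L^\delta)$ for $m\neq n$. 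Because at $L^\delta=0$ the diagonal entries are bounded away from zero, a Gershgorin comparison keeps $\nabla^2 g\prec0$, equivalently $\nabla^2 f\succ0$, for all sufficiently small $L^\delta$; the controlling quantity is the relative interference ratio $\Gamma\sum_{n\neq m}|H_{mn}|/H_{mm}=\mathcal O(L^\delta)$.

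The hard part will be the \emph{uniformity of the threshold on $L^\delta$ over the power domain}: the stabilizing diagonal entry $w_m(H_{mm}/\Gamma)^2/(\ln2\,S_m^2)$ decays like $1/P_m^2$ as $P_m\to\infty$, so a naive pointwise bound would force the admissible $L^\delta$ to shrink with $\|\mathbf P\|$. I would resolve this by restricting to the compact feasible region $0\le P_k\le P_{\max}$ (the per‑transmitter peak‑power budget, to which the penalty $\gamma_kP_k$ in any case confines the minimizer): on this compact set the diagonal terms of $\nabla^2 g|_{L^\delta=0}$ are bounded below by $w_k(H_{kk}/\Gamma)^2/(\ln2\,(N_0+H_{kk}P_{\max}/\Gamma)^2)>0$ while the off‑diagonal terms are uniformly $\mathcal O(L^\delta)$. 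Continuity of the Hessian on this compact set then yields a single threshold $\bar L$ such that $f(\cdot,L^\delta)$ is convex in $\mathbf P$ whenever $L^\delta<\bar L$, which is precisely the claim for ``sufficiently small $L^\delta$''. This also dovetails with the global‑optimality conclusion of Corollary~\ref{collaryAlg}, since convexity of $f$ is exactly what makes the fixed‑point iteration in Algorithm~\ref{algorithm} converge to the unique minimizer.
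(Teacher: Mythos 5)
Your proof is correct on the domain you state it for, and at its core it is the same argument as the paper's: treat the $L^\delta=0$ system as the anchor, observe that every second-order term carrying a cross gain is $\mathcal{O}(L^\delta)$ or $\mathcal{O}((L^\delta)^2)$, and conclude that the strictly negative curvature coming from each user's own signal survives the perturbation. The technical device differs: the paper restricts $f$ to a line segment between two arbitrary feasible points $\mathbf{P}^{(1)},\mathbf{P}^{(2)}$ and computes $\mathrm{d}^2 f/\mathrm{d}t^2$, keeping the nonnegative term driven by $\frac{1}{\Gamma}L_{kk}\widetilde{H}_{kk}\big(P_k^{(1)}-P_k^{(2)}\big)$ and discarding the negative term proportional to $(L^\delta)^2$, whereas you decompose $\nabla^2 C_k$ into rank-one pieces via $C_k=\log_2 S_k-\log_2 I_k$ and run a Gershgorin comparison on the full Hessian; these are equivalent in content. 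Where you genuinely go beyond the paper is the uniformity issue, and you are right that it is the hard part: the stabilizing diagonal curvature decays like $1/P_m^2$, and no single threshold $\bar L$ works on the unbounded feasible set $\mathbf{P}\geq\mathbf{0}$. Indeed, for any fixed $L^\delta>0$ the claim fails pointwise at large powers (e.g., with two users, $\Gamma=N_0=H_{11}=1$, $H_{12}=H_{21}=\epsilon$ and $P_1=0$, one gets $\partial^2 f/\partial P_1^2<0$ once $P_2\gtrsim 1/\epsilon^2$), so some restriction to a bounded power region — your box $[0,P_{\max}]^K$, or a sublevel set that the linear penalty $\gamma_k P_k$ forces the minimizers into — is genuinely needed, even though the paper's model has no peak-power constraint. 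The paper's own proof hides exactly this gap behind its ``$\approx$'' and ``can be ignored'' steps, which are pointwise in $\big(\mathbf{P}^{(1)},\mathbf{P}^{(2)}\big)$ rather than uniform; your version is both the paper's argument and a more honest statement of what it actually establishes.
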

\begin{proof}
	We adopt the following argument to prove the convexity  \cite{boyd}: given  two  feasible points $\mathbf{x}_1$ and $\mathbf{x}_2$, define $g(t)= f(t \mathbf{x}_1+(1-t)\mathbf{x}_2)$, $0\leq t \leq 1$, then $f(\mathbf{x})$ is a convex function of $\mathbf{x}$ if and only if $g(t)$ is a convex function of $t$, which is equivalent to $\frac{\mathrm{d}^2 g(t)}{\mathrm{d}t^2} \geq 0$ for $0 \leq t \leq 1$.
	
	Consider the convex combination of two  feasible solutions $\mathbf{P}^{(1)}=\big\{P_k^{(1)}:  k\in \mathcal{A}(\delta) \big\}$ and $\mathbf{P}^{(2)}=\big\{P_k^{(2)}:  k\in \mathcal{A}(\delta) \big\}$ as follows: $\mathbf{P}^c=\big\{P_k^c=t P_k^{(1)} + (1-t) P_k^{(2)} :  k\in \mathcal{A}(\delta) \big\}$ and $0 \leq t \leq 1$. We write $H_{kj}=L_{kj}\widetilde{H}_{kj}$, where $\widetilde{H}_{kj}$ is the short-term fading path gain. Denote $\mathbf{P}_{-k}=\{P_j: \forall j \neq k, j \in \mathcal{A}(\delta)\}$, $R_k(\mathbf{P}_{-k}) = N_0+\sum_{j\neq k, j \in \mathcal{A}(\delta)}L_{kj}\widetilde{H}_{kj}P_j$ and $a_k=\frac{1}{\ln 2}\frac{\partial \widetilde{V}\left(\mathbf{Q}\right)}{\partial Q_k} \geq 0$, then the second order derivative of $f\left( \mathbf{P}^c, L^\delta \right)$ is calculated as:
\begin{small}\begin{align}
		& \frac{\mathrm{d}^2 f\left( \mathbf{P}^c, L^\delta \right)}{\mathrm{d}t^2} =\sum_{k \in \mathcal{A}(\delta)} \left[a_k\left( \left( R_k(\mathbf{P}_{-k}^c)  + \frac{1}{\Gamma}L_{kk} \widetilde{H}_{kk}\sigma_k P_k^c \right)^{-2}\right.\right.\nonumber\\
&\hspace{1cm}\left( \frac{\mathrm{d}R_k(\mathbf{P}_{-k}^c)}{\mathrm{d}t}  +  \frac{1}{\Gamma}L_{kk}  L_{kk} \widetilde{H}_{kk}\sigma_k \left(P_k^{(1)}-P_k^{(2)} \right)\right)^2 \notag \\
&\hspace{1cm} \left.\left. - R_k^{-2}(\mathbf{P}_{-k}^c) \left(\frac{\mathrm{d}R_k(\mathbf{P}_{-k}^c)}{\mathrm{d}t}\right)^2 \right)\right]	\label{firstrermsd}
\end{align}\end{small}
where $\frac{\mathrm{d}R_k(\mathbf{P}_{-k}^c) }{\mathrm{d}t}=\sum_{j\neq k, j \in \mathcal{A}(\delta)}L_{kj}\widetilde{H}_{kj}\left(P_j^1-P_j^2 \right)$ does not depend on $t$.	
	
	As $L^\delta$ becomes sufficiently small,  $\frac{\mathrm{d}R_k(\mathbf{P}_{-k}^c)}{\mathrm{d}t} $ is proportional to $L^\delta$ and $\frac{\mathrm{d}R_k(\mathbf{P}_{-k}^c)}{\mathrm{d}t}  + \frac{1}{\Gamma}L_{kk} \widetilde{H}_{kk}\sigma_k \big(P_k^{(1)}-P_k^{(2)} \big)$ is dominate by $ \frac{1}{\Gamma}L_{kk} \widetilde{H}_{kk}\sigma_k \big(P_k^{(1)}-P_k^{(2)} \big)$.  $R_k^{-2}(\mathbf{P}_{-k}^c) \big(\frac{\mathrm{d}R_k(\mathbf{P}_{-k}^c)}{\mathrm{d}t}\big)^2 $ is proportional to $(L^\delta)^2$, and hence it has little impact and can be ignored. Therefore, we have
\begin{small}\begin{align}
&\frac{\mathrm{d}^2 f\left( \mathbf{P}^c, L^\delta \right)}{\mathrm{d}t^2} \approx \sum_{k \in \mathcal{A}(\delta)} \Bigg(a_k \left( R_k(\mathbf{P}_{-k}^c) +  \frac{1}{\Gamma} L_{kk} \widetilde{H}_{kk}\sigma_k P_k^c \right)^{-2} \notag\\
&~~~~\left( \frac{\mathrm{d}R_k(\mathbf{P}_{-k}^c)}{\mathrm{d}t}  + \frac{1}{\Gamma} L_{kk} \widetilde{H}_{kk}\sigma_k \left(P_k^{(1)}-P_k^{(2)} \right)\right)^2 \Bigg)	\geq 0
\end{align}\end{small}
for sufficiently small $L^\delta$. Therefore, $f\left( \mathbf{P}, L^\delta \right)$ is convex for sufficiently small $L^\delta$.
\end{proof}

For sufficiently large $\delta$, $L^\delta$ is sufficiently small, so the problem in (\ref{obj11}) is  convex, and hence (\ref{utility}) is  convex according to Lemma \ref{lemma8}. Furthermore, since the limiting point $\mathbf{P}(\infty)$ of algorithm \ref{algorithm} is  a stationary point of the problem (\ref{utility}), it is also the unique global optimal point of (\ref{utility}).

\begin{IEEEbiography}[{\includegraphics[width=1in,height=1.25in,clip,keepaspectratio]{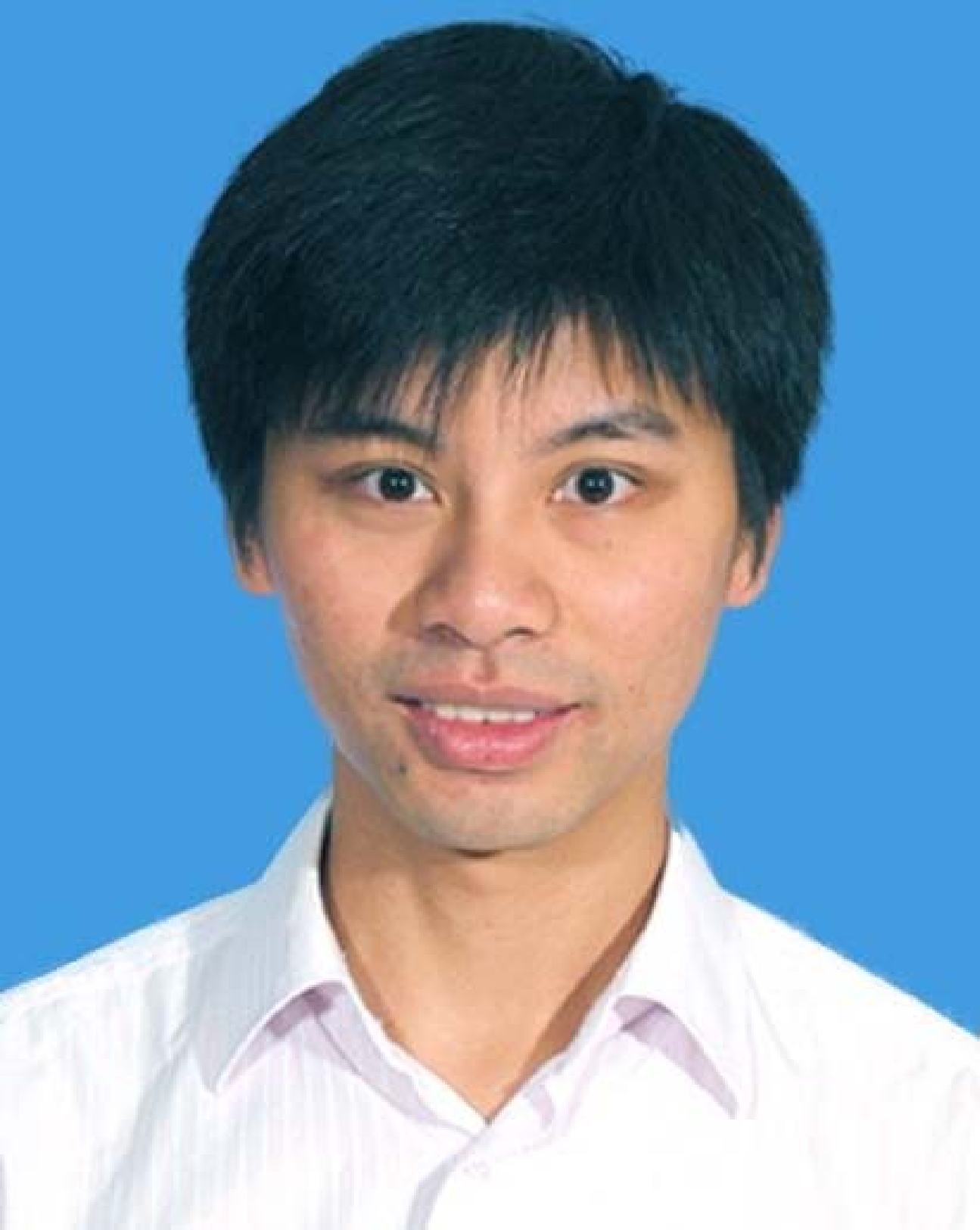}}]{Wei~Wang~(StM'08-M'10)}
received the B.S. degree in Communication Engineering and the Ph.D. degree in Signal and Information Processing from Beijing University of Posts and Telecommunications, China in 2004 and 2009, respectively. Now, he is an associate professor with Department of Information Science and Electronic Engineering, Zhejiang University, China. From Sept. 2007 to Sept. 2008, he was a visiting student with University of Michigan, Ann Arbor, USA. Since Feb. 2013, he has also been a Hong Kong Scholar with Hong Kong University of Science and Technology, Hong Kong. His research interests mainly focus on cognitive radio networks, green communications, and radio resource allocation for wireless networks.

He is the editor of the book ``\emph{Cognitive Radio Systems}" (Intech, 2009) and serves as an editor for \emph{Transactions on Emerging Telecommunications Technologies (ETT)}. He serves as TPC co-chair for CRNet 2010 and NRN 2011, symposium co-chair for WCSP 2013, tutorial co-chair for ISCIT 2011, and also serves as TPC member for major international conferences.
\end{IEEEbiography}

\begin{IEEEbiography}[{\includegraphics[width=1in,height=1.25in,clip,keepaspectratio]{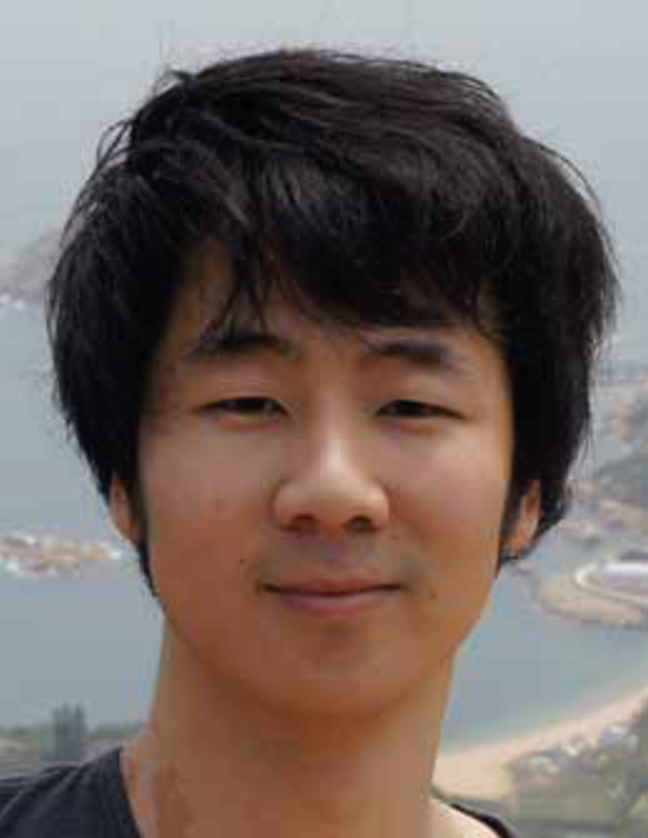}}]{Fan~Zhang~(StM'10)}
revived the B.Eng. (First Class Hons) from Chu Kochen Honors College at Zhejiang University in  2010. He is currently pursuing a Ph.D degree in the Department of Electronic and Computer Engineering, Hong Kong University of Science and Technology (HKUST). His research interests include cross-layer delay-sensitive resource allocation, dynamic programming and control for wireless communication systems.
\end{IEEEbiography}

\begin{IEEEbiography}[{\includegraphics[width=1in,height=1.25in,clip,keepaspectratio]{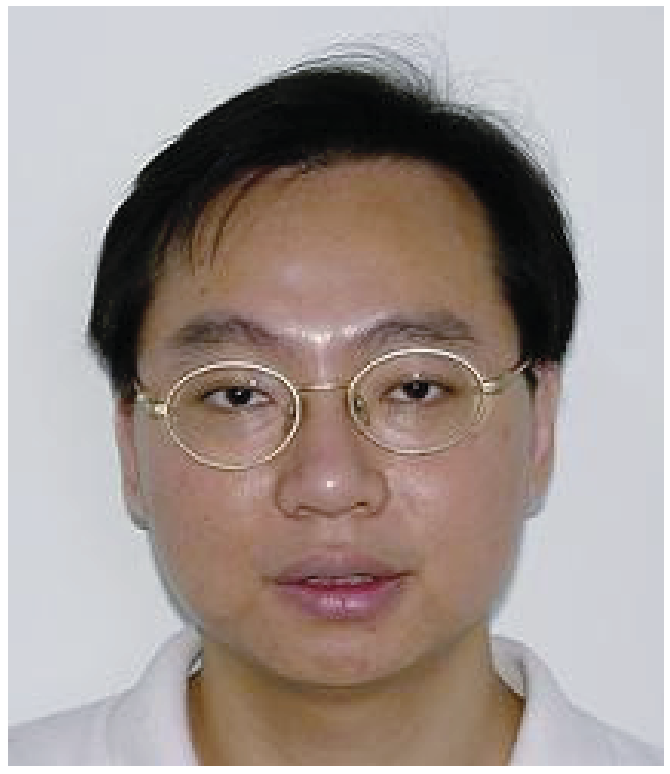}}]{Vincent~K.~N.~Lau~(F'11)}
 received the B.Eng (Distinction 1st Hons) from the University of Hong Kong (1989-1992)
and Ph.D. from Cambridge University (1995-1997). He was with HK Telecom (PCCW) as system
engineer from 1992-1995 and Bell Labs - Lucent Technologies as member of technical staff
from 1997-2003. He is currently a Chair Processor in the Department of ECE, Hong Kong University of Science and Technology (HKUST), and the Founding Director of Huawei-HKUST Joint Innovation Lab. His research interests include the delay-sensitive cross-layer optimization of MIMO/OFDM wireless systems, cooperative communications, as well as stochastic approximation and Markov Decision Process.

He is a Fellow of IEEE, Fellow of HKIE, Changjiang Chair Professor and the Croucher Senior Research Fellow. He is currently an Area Editor of \emph{IEEE Transactions on Wireless Communications}, an Area Editor of \emph{IEEE Signal Processing Letters}.
\end{IEEEbiography}


\begin{thebibliography}{1}

\bibitem{3gpp}
3GPP TR 22.803, ``\emph{Feasibility Study for Proximity Services (ProSe)}," v12.0.0, Dec. 2012.

\bibitem{D2D1}
G. Fodor, E. Dahlman, G. Mildh, etc. ``Design aspects of network assisted device-to-device communications," \emph{IEEE Commun. Mag.}, vol. 50, no. 3, pp. 170--177, Mar. 2012.

\bibitem{D2D2}
K. Doppler, M. Rinne, C. Wijting, C. B. Ribeiro, K. Hugl, ``Device-to-device communication as an underlay to LTE-Advanced networks," \emph{IEEE Commun. Mag.}, vol. 47, no. 12, pp. 42--49, Dec. 2009.

\bibitem{D2D3}
B. Kaufman, J. Lilleberg, B. Aazhang, ``Spectrum sharing scheme between
cellular users and ad-hoc device-to-device users," \emph{IEEE Trans. Wireless Commun.}, vol. 12, no. 3, pp. 1038--1049, Mar. 2013.

\bibitem{D2D4}
K. Doppler, C. H. Yu, C.B. Ribeiro, P. Janis, ``Mode selection for device-to-device communication underlaying an LTE-Advanced network,", \emph{Proc. of IEEE WCNC 2010}, Apr. 2010.

\bibitem{D2D5}
J. Seppala, T. Koskela, T. Chen, S. Hakola, ``Network controlled device-to-device (D2D) and cluster multicast concept for LTE and LTE-A networks," \emph{Proc. of IEEE WCNC 2011}, Mar. 2011.

\bibitem{D2D6}
J. Li, M, Lei, F. Gao, ``Device-to-device (D2D) communication in MU-MIMO cellular networks," \emph{Proc. of IEEE Globecom 2012}, Dec. 2012.


\bibitem{power1}
{J. Zander, ``Performance of optimum transmitter power control in cellular radio systems," \emph{IEEE Trans. Veh. Tech.}, vol. 41, no. 1, pp. 57--62, Feb. 1992.}


\bibitem{power3}
{M. Chiani, A. Conti, R. Verdone, ``Partial compensation signal-level-based up-link power control to extend terminal battery duration," \emph{IEEE Trans. Veh. Tech.}, vol. 50, no. 4, pp. 1125--1131, Jul. 2001.}

\bibitem{power4}
W. Wang, W. Wang, Q. Lu, K. G. Shin, T. Peng, ``Geometry-based optimal power control of fading multiple access channels for maximum sum-rate in cognitive radio networks," \emph{IEEE Trans. Wireless Commun.}, vol. 9, no. 6, pp. 1843--1848, Jun. 2010.

\bibitem{power5}
{G. Fodor, N. Reider, ``A distributed power control scheme for cellular network assisted D2D communications," \emph{Proc. of IEEE Globecom 2011}, Dec. 2011.}


\bibitem{DPcontrol}
D. P. Bertsekas, \emph{Dynamic Programming and Optimal Control}, 3rd ed. Massachusetts: Athena Scientific, 2007.

\bibitem{delay2}
Y. Cui, Q. Huang, V. K. N. Lau, ``Queue-aware dynamic clustering and power allocation for network MIMO systems via distributed stochastic learning,"  \emph{IEEE Trans. Signal Process.}, vol. 59, no. 3, pp. 1229--1238, Mar. 2011.

\bibitem{mag}
W. Wang, V. K. N. Lau, ``Delay-aware cross-layer design for device-to-device communications in future cellular systems," \emph{IEEE Commun. Mag.}, vol. 52, no. 6, pp. 133--139, Jun. 2014.

\bibitem{delay1}
Y. Cui, V. K. N. Lau, R. Wang, H. Huang, S. Zhang, ``A survey on delay-aware resource control for wireless systems--Large derivation theory, stochastic Lyapunov drift and distributed stochastic learning," \emph{IEEE Trans. Info. Theory}, vol. 58, no. 3, pp. 1677--1700, Mar. 2012.




\bibitem{interference1}
{M. Z. Win, P. C. Pinto, L. A. Shepp, ``A mathematical theory of network interference and its applications," \emph{Proc. of the IEEE}, vol. 97, no. 2, pp. 205--230, Feb. 2009.}


\bibitem{interference3}
{A. Rabbachin, T. Q. S. Quek, H. Shin, M. Z. Win, ``Cognitive network interference," \emph{IEEE J. Sel. Areas Commun.}, vol. 29, no. 2, pp. 480--493, Feb. 2011.}


\bibitem{block1}
{R. McEliece, W. E. Stark, ``Channels with block interference," \emph{IEEE Trans. Info. Theory}, vol. 30, no. 1, pp. 44--53, Jan. 1984.}

\bibitem{block2}
{M. Chiani, A. Conti, O. Andrisano, ``Outage evaluation for slow frequency-hopping mobile radio systems," \emph{IEEE Trans. Commun.}, vol. 47, no. 12, pp. 1865--1874, Dec. 1999.}

\bibitem{tse}
D. Tse, P. Viswanath, \emph{Fundamentals of Wireless Communications}, Cambridge University Press, 2005.

\bibitem{SINRgap}
{J. M. Cioffi, G. D. Dudevoir, M. V. Eyubouglu, and G. D. Forney
Jr, ``MMSE decision-feedback equalizers and coding - part II: coding
results," \emph{IEEE Trans. Commun.}, vol. 43, no. 10, pp. 2595--2604, Oct.
1995.}

\bibitem{csma}
X. Wang, K. Kar, ``Throughput modelling and fairness issues in CSMA/CA based ad-hoc networks," \emph{Proc. of IEEE Infocom 2005}, Mar. 2005


\bibitem{geo}
M. Haenggi, \emph{Stochastic Geometry for Wireless Networks}, Cambrige University Press, 2013.


\bibitem{comparison2}
{M. J. Neely, \emph{Stochastic Network Optimization with Application to Communication and Queueing Systems}, Morgan \& Claypool, 2010.}


\bibitem{weiyu}
W. Yu, ``Multiuser water-filling in the presence of crosstalk," \emph{Proc. of IEEE ITA 2007}, Jan. 2007.

\bibitem{WSR}
P. C. Weeraddana, M. Codreanu, M. Latva-aho, A. Ephremides, C. Fischione, ``Weighted sum-rate maximization in wireless networks: A review", \emph{Foundations and Trends in Networking}, NOW, vol. 6, nos. 1--2, pp. 1--163, 2011.

\bibitem{cont1}
G. Scutari, D. P. Palomar, S. Barbarossa, ``The MIMO iterative waterfilling algorithm," \emph{IEEE Trans. Signal Process.}, vol. 57, no. 5, pp. 1917--1935, 2009.


\bibitem{D2D7}
L. Lei, Z. Zhong, C. Lin, X. Shen, ``Operator controlled device-to-device communications in LTE-advanced networks," \emph{IEEE Wireless Commun.}, vol. 19, no. 3, pp. 96--104, Jun. 2012.

%

\bibitem{3gpp4}
3GPP TS 36.213, ``\emph{Evolved Universal Terrestrial
Radio Access (E-UTRA); Physical layer procedures}," v10.3.0, Sept. 2011.



\bibitem{OFDMA2}
{H. Zhu, J. Wang, ``Chunk-based resource allocation in OFDMA systems--part II: Joint chunk, power and bit allocation," \emph{IEEE Trans. Commun.}, vol. 60, no. 2, pp. 499--509, Feb. 2012.}

\bibitem{comparison1}
{D. S. W. Hui, V. K. N. Lau, H. L. Wong, ``Design and analysis of OFDMA optimization with heterogeneous delay requirements," \emph{IEEE Trans. Wireless Commun.}, pp. 2872--2880, Aug. 2007.}

\bibitem{D2D8}
C. H. Yu, K. Doppler, C.B. Ribeiro, O. Tirkkonen, ``Resource sharing optimization for device-to-device communication underlaying cellular networks," \emph{IEEE Trans. Wireless Commun.}, vol. 10, no. 8, pp. 2752--2763, Aug. 2011.


\bibitem{MLWDF}
M. Andrews, K. Kumaran, K. Ramanan, A. Stolyar, R. Vijayakumar, and P. Whiting, ``Scheduling in a queueing system with asynchronously varying service rates,"  \emph{Probability in the Engineering and Informational Sciences}, vol. 18, no. 2, pp. 191-217, 2004.

\bibitem{3gpp5}
3GPP TR 36.814, ``\emph{Evolved Universal Terrestrial
Radio Access (E-UTRA); Further advancements for E-UTRA physical layer aspects}," v9.0.0, Mar. 2010.

%

\bibitem{neelyitp}
M. J. Neely, ``Energy optimal control for time varying wireless networks," \emph{IEEE Trans. Inf. Theory}, vol. 52, no. 7, pp. 1--18, Jul. 2006.

\bibitem{dspgal}
R. Gallager, \emph{Discrete Stochastic Processes}. \ Boston, MA: Kluwer Academic, 1996.

\bibitem{HandODE}
A. D. Polyanin, V. F. Zaitsev, A. Moussiaux, \emph{Handbook of Exact Solutions for Ordinary Differential Equations}, 2nd ed.  \ Chapman  \& Hall/CRC Press, Boca Raton, 2003.

\bibitem{lambert}
A. Hoorfar, M. Hassani, ``Inequalities on the Lambert W function and hyperpower function,"  \emph{Journal of Inequalities in Pure and Applied Mathematics (JIPAM)}, vol. 9, no. 2, 2008.

\bibitem{paraanay}
J. F. Bonnans, A. Shapiro, ``Optimization problems with perturbations: a guided tour,"  \emph{SIAM Reviews}, vol. 40, no. 2, pp. 228--264, June 1998.

\bibitem{handbookPDE}
A. D. Polyanin, V. F. Zaitsev, A. Moussiaux, \emph{Handbook of First Order Partial Differential Equations}, 2nd ed. \  Taylor \& Francis, 2002.

\bibitem{boyd}
S. Boyd, \emph{Convex Optimization}. \ Cambridge University Press, 2004.

\end{thebibliography}
\end{document}